\documentclass[preprint,12pt]{elsarticle}

\usepackage{bm}
\usepackage{comment}
\def\tsc#1{\csdef{#1}{\textsc{\lowercase{#1}}\xspace}}
\tsc{WGM}
\tsc{QE}
\tsc{EP}
\tsc{PMS}
\tsc{BEC}
\tsc{DE}

\usepackage{framed} 
\usepackage{multicol}
\usepackage{nomencl} 
\makenomenclature
\renewcommand*\nompreamble{\begin{multicols}{2}}
\renewcommand*\nompostamble{\end{multicols}}

\usepackage{algorithm}
\usepackage{algpseudocode}
\usepackage{amsmath,amsfonts,amsthm}
\usepackage{graphicx}
\usepackage{url}
\usepackage[table,dvipsnames]{xcolor}
\usepackage{hyperref} 
\usepackage{booktabs}
\usepackage{makecell}
\usepackage{amsmath}
\usepackage{mhchem}
\usepackage{lineno}
\usepackage{fancyhdr}
\usepackage{adjustbox}
\usepackage{amssymb}
\usepackage{float}
\usepackage{amsmath,amssymb,amsthm}
\theoremstyle{plain}
\newtheorem{lemma}{Lemma}[section]

\algrenewcommand\algorithmicrequire{\textbf{Input:}}
\algrenewcommand\algorithmicensure{\textbf{Output:}}
\usepackage[caption=false,font=normalsize,labelfont=sf,textfont=sf]{subfig}
\hypersetup{
    colorlinks=true, 
    linkcolor=black, 
    filecolor=black, 
    urlcolor=black, 
    citecolor=black, 
    linkbordercolor={1 1 1}, 
    pdfborder={0 0 0} 
}

\journal{Applied Energy}

\begin{document}
\begin{frontmatter}



\title{Optimal sizing of a hydrogen-based direct reduced iron–electric arc furnace system integrated with methanol synthesis toward zero-carbon steel production}

\affiliation[1]{organization={Department of Electrical Engineering, Tsinghua University},
                city={Beijing},
                postcode={100084}, 
                country={China}}
\affiliation[2]{organization={School of Modern Chemical Engineering, Ningxia Institute of Applied Technology},
                city={Ningxia},
                postcode={753000}, 
                country={China}}
\affiliation[3]{organization={State Key Laboratory of Power System Operation and Control},
                city={Beijing},
                postcode={100084}, 
                country={China}}
\affiliation[4]{organization={State Key Laboratory of Smart Power Distribution Equipment and System, State Grid Jibei Electric Power Company Limited},
                city={Beijing},
                postcode={100083}, 
                country={China}}
\author[1]{Qiang Ji}

\author[1]{Fashun Shi \corref{cor1}}
\ead{shi_fashun@tsinghua.edu.cn}
\cortext[cor1]{Corresponding author}

\author[1]{Lin Cheng}

\author[2]{Yeye Xie}

\author[3]{Yufei Xi}

\author[1]{Zhen Dai}

\author[4]{Xun Wang}

\begin{abstract}
A hydrogen-based direct reduced iron-electric arc furnace system integrated with methanol synthesis (\ce{H2}-DRI-EAF-MeOH) provides a feasible pathway toward zero-carbon steel production. However, the volatility and intermittency of renewable energy sources (RES) may lead to capacity oversizing, while conventional annualized-cost and levelized-cost metrics are insufficient to evaluate investment return. To address these issues, this paper develops a fractional programming-based optimal sizing model for the \ce{H2}-DRI-EAF-MeOH system. Hourly production rate limits are decoupled from annual production capacities and formulated as independent decision variables. Theoretical analysis demonstrates that increasing hourly production rate limits expands the multi-period operating feasible region, increases the theoretical upper bound on RES matching, and provides an investment-efficiency condition for increasing production rate limits. The annualized net return on investment (ANROI) is then adopted to represent investment return, and the resulting mixed-integer linear fractional programming model is solved using the Dinkelbach method. Case studies show that compared with the annualized net return (ANR) benchmark, the ANROI-based sizing yields higher internal rates of return (IRR), reaching 20.67\% and 17.08\% under the on-grid and off-grid scenarios, respectively. As the equivalent annual operating hours decrease from 8000 to 4000 h, the IRR increases from 19.68\% to 20.67\% under the on-grid scenario and from 12.64\% to 17.08\% under the off-grid scenario, while the corresponding battery storage capacities decrease by 113.56 and 796.31 MWh. The off-grid zero-carbon constraint enables \ce{CO2}-to-MeOH conversion but reduces the system IRR to 14.99\%, indicating that the resulting revenues are insufficient to offset the additional costs.
\end{abstract}
\begin{graphicalabstract}
\begin{adjustbox}{width=1\textwidth,center}
\includegraphics{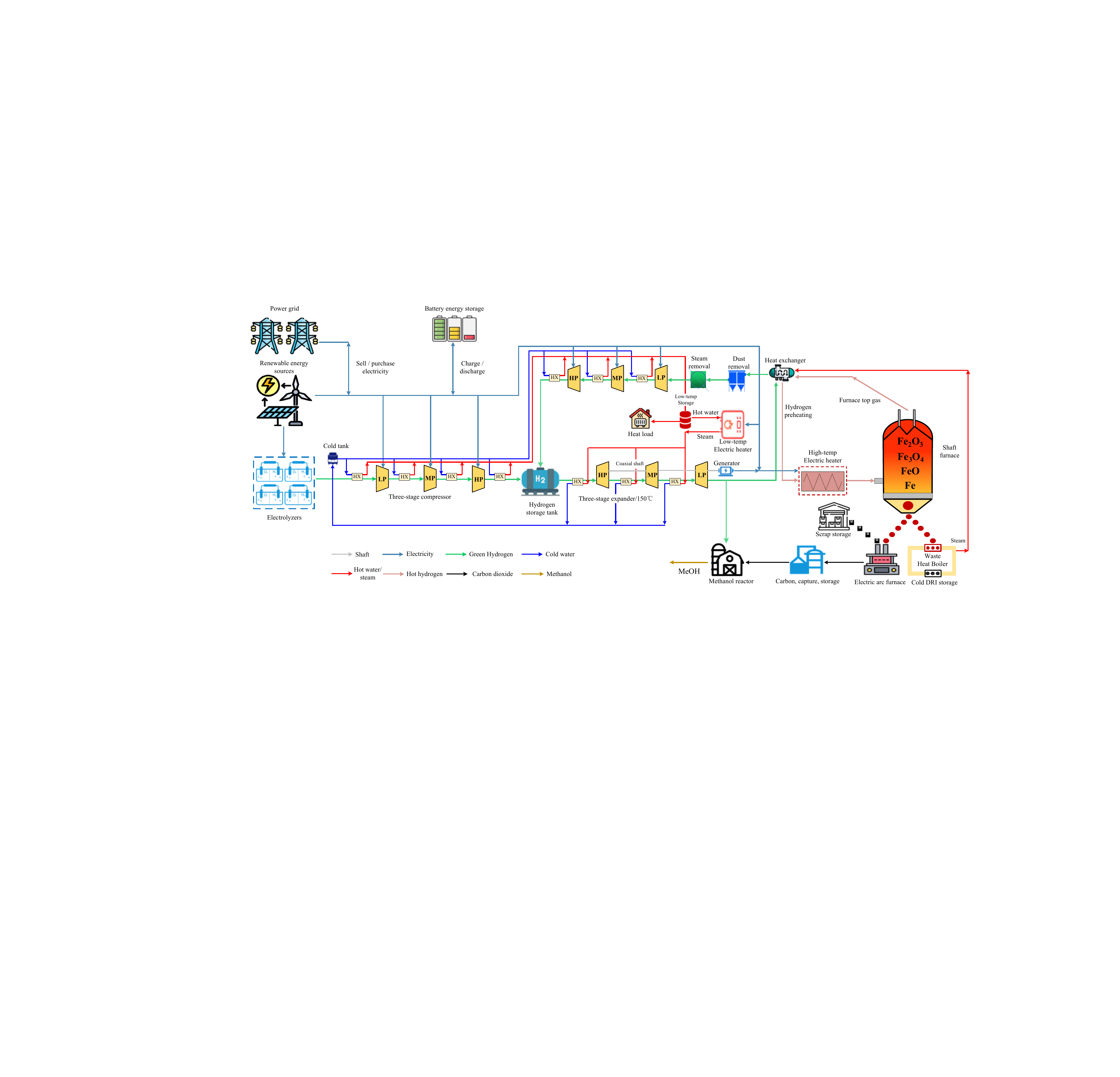}
\end{adjustbox}
\end{graphicalabstract}

\begin{highlights}
    \item On-grid and off-grid IRRs reach 20.67\% and 17.08\%, respectively.
    \item Hourly production rate improves return and reduces battery storage.
    \item Off-grid zero-carbon constraint lowers the system IRR to 14.99\%.
\end{highlights}
\begin{keyword}
    Renewable energy sources \sep hydrogen-based direct reduced iron \sep optimal sizing \sep hourly production rate \sep internal rate of return
\end{keyword}
\end{frontmatter}
\section{Introduction}
\label{sec1}
\subsection{Background}
Accounting for approximately 7\% of global \ce{CO2} emissions and 8\% of global energy consumption \cite{lei}, the iron and steel industry is a key sector for deep industrial decarbonization and climate-change mitigation \cite{meng2025technologies,su2023multi}. However, achieving deep decarbonization in this sector remains challenging \cite{xu2023plant,ji2025energy}, as the blast furnace-basic oxygen furnace (BF-BOF) route still accounts for about 71.2\% of global crude steel production \cite{ariyama2019diversification}. This route relies heavily on coke and pulverized coal \cite{gu2024co2}, emitting approximately 1.9 t \ce{CO2} per tonne of crude steel \cite{sheng2024rational}. Although retrofitting measures such as waste heat recovery \cite{zhou2024application}, hydrogen-enriched carbon recycling \cite{xia2026study}, and top gas recycling \cite{perpinan2023power} can reduce \ce{CO2} emissions from the BF-BOF route, their mitigation potential remains limited because the route still relies on carbon-based reductants. To overcome the carbon-based reduction bottleneck of the BF-BOF route, a hydrogen-based direct reduced iron-electric arc furnace system integrated with methanol synthesis (\ce{H2}-DRI-EAF-MeOH) provides a feasible pathway toward zero-carbon steel production. In this system, hydrogen replaces carbon-based reductants in iron ore reduction (\ce{Fe2O3 + 3H2 -> 2Fe + 3H2O}), while captured \ce{CO2} is converted into MeOH (\ce{CO2 + 3H2 -> CH3OH + H2O}). \par
For the RES-driven \ce{H2}-DRI-EAF-MeOH system, the intermittency and variability of RES \cite{ostergaard2022renewable} cause temporal mismatches between renewable electricity supply and the power demand of the electrolyzer, shaft furnace (SF), electric arc furnace (EAF), and methanol synthesis reactor (MSR). Battery energy storage (BS) is commonly used to buffer these mismatches \cite{saini2022investigation,yang2022modelling,levin2023energy}; however, excessive BS expansion may increase capital investment and reduce investment return. 
Accordingly, it is necessary to decouple maximum hourly production rates from annual production capacities and treat it as an independent decision variable, thereby mitigating temporal mismatches through production rate flexibility and reducing reliance on BS capacity expansion. However, because different levels of production rate flexibility entail different capital investments, maximizing annualized net return (ANR) may favor solutions with higher absolute returns but lower investment efficiency. Therefore, annualized net return on investment (ANROI) is adopted as the fractional objective to account for investment efficiency in optimal sizing. \par 
\subsection{Literature review}
Existing studies \cite{shahabuddin2023decarbonisation,rumsa2025global} on the optimal sizing of low-carbon steelmaking systems have primarily focused on multi-stage capacity planning. For example, Ref.~\cite{sheng2024rational} proposed a multi-stage expansion planning model with dynamic techno-economic parameters and adopted a rolling-horizon method to optimize long-term capacity investment strategies for renewable hydrogen-based steelmaking systems. Ref.~\cite{wu2025multi} developed a multi-objective, multi-stage capacity planning model for a wind-natural gas-hydrogen coupled steel production system with a stage-adjustable \ce{H2}-CO ratio. However, these studies paid limited attention to the effects of the operational flexibility of key production units on the levelized cost of steel production. In contrast, other techno-economic studies have analyzed operational flexibility and the oversizing of key production units in low-carbon steelmaking systems \cite{salimbeni2023techno,boldrini2024flexibility,weiss2024flexible}. Ref.~\cite{bhaskar2022decarbonizing} developed a techno-economic model for an on-grid \ce{H2}-DRI-EAF system to quantify the impacts of electrolyzer oversizing and time-varying electricity prices on levelized production cost. Ref.~\cite{he2026unlocking} developed an optimal sizing model for a RES-driven \ce{H2}-DRI-EAF system, showing that multi-stage production flexibility can lower the levelized cost of steel production while inducing differentiated overcapacity in the electrolyzer, SF, and EAF. 
However, in these studies, higher production flexibility is achieved through the oversizing of key production units. Although such oversizing may reduce levelized production costs, it may also lead to redundant investment and overcapacity, thereby intensifying market competition \cite{organisation2015excess,sun2025excess}. Therefore, it is necessary to decouple maximum hourly production-rate limits from annual production capacities and treat them as independent decision variables, thereby increasing production rate flexibility without expanding annual production capacities and improving investment efficiency. \par
Additionally, techno-economic analyses of the integrated \ce{H2}-DRI-EAF-MeOH system remain limited, and optimal sizing of this system has received even less attention. For such a multi-product integrated system, optimal sizing entails jointly balancing capital investment, operating costs, and multiple revenue streams, as single-product cost metrics are insufficient to evaluate overall investment efficiency \cite{yu2025novel}. However, existing studies on green hydrogen production, hydrogen-based steelmaking, and green methanol synthesis generally analyze techno-economic performance using net present value (NPV), ANR, and product-specific levelized-cost metrics, including the levelized costs of electricity, hydrogen, steel, and MeOH \cite{he2026unlocking,gu2022techno,delapedra2022methods}. For example, Ref.~\cite{devlin2023global} assessed the regional cost competitiveness of green steel production using an NPV-based lifetime-cost assessment, while Ref.~\cite{su2025capacity} optimized BS capacity by maximizing ANR. Refs.~\cite{darling2011assumptions,henry2023techno,roulier2026hybrid,chen2021renewable} further evaluated photovoltaic power generation, green hydrogen production, steel production, and MeOH production using the levelized costs of electricity, hydrogen, steel, and MeOH, respectively. However, NPV and ANR evaluate absolute returns without normalizing them by the associated capital investment and may therefore favor capital-intensive sizing solutions with lower investment efficiency. Meanwhile, levelized-cost metrics focus on the unit costs of individual products and are insufficient to represent the overall investment efficiency of an integrated \ce{H2}-DRI-EAF-MeOH system with multiple revenue streams. Therefore, it is necessary to formulate a fractional programming model for optimal sizing, in which ANROI is adopted as the fractional objective to evaluate system economic performance. \par
\subsection{Research gap and contribution}
Overall, techno-economic analyses of the \ce{H2}-DRI-EAF-MeOH system remain limited, particularly in jointly considering hourly production rate flexibility and investment efficiency in optimal sizing. To address these gaps, this study develops a fractional programming-based optimal sizing model for the \ce{H2}-DRI-EAF-MeOH system. The main contributions are as follows: \par
1) The maximum hourly production rates of the SF, EAF, and MSR are decoupled from annual production capacities and formulated as independent decision variables. Theoretical analysis demonstrates that increasing hourly production rate limits expands the multi-period operating feasible region, increases the theoretical upper bound on RES matching, and provides an investment-efficiency condition for increasing production rate limits. \par
2) A mixed-integer linear fractional programming (MILFP) model is established for optimal sizing, with ANROI as the fractional objective. The model is solved using the Dinkelbach method, and system performance is evaluated using internal rate of return (IRR), discounted payback period (DPP), renewable energy curtailment rate (RECR), and net on-site renewable energy utilization rate (OSREUR).\par
3) Case studies are conducted under on-grid, off-grid, and off-grid zero-carbon (no direct carbon emissions) scenarios to validate the proposed fractional programming model incorporating hourly production rate flexibility. The simulation results demonstrate that the proposed model outperforms the annualized net return maximization benchmark in terms of IRR, DPP, RECR, and OSREUR.
\subsection{Paper organization}
The rest of the paper is organized as follows: Section 2 describes the architecture and operation models of the \ce{H2}-DRI-EAF-MeOH system. Section 3 presents the MILFP model and evaluation metrics. The case study is carried out in Section 4. Conclusions are drawn in Section 5.
\section{The architecture and operation models of the \ce{H2}-DRI-EAF-MeOH system}
\label{sec2}
\subsection{The architecture of the \ce{H2}-DRI-EAF-MeOH system}
\label{subsec1}
The architecture of the proposed \ce{H2}-DRI-EAF-MeOH system is illustrated in Fig. 1. The overall production process is as follows: hydrogen is used to reduce \ce{Fe2O3} to obtain DRI; the DRI and scrap steel are subsequently melted in the EAF, where carbon powder is added to produce crude steel; meanwhile, the generated \ce{CO2} is captured and reacted with hydrogen to synthesize MeOH in the MSR, thereby avoiding direct \ce{CO2} emissions. The energy-supply side comprises RES and the external power grid. The load side includes alkaline electrolyzer (AE), high- and low-temperature resistance heaters, SF, EAF, and MSR. The storage units include BS, hydrogen storage tank (HS), low-temperature thermal storage tank (Lts), \ce{CO2} storage tank (CS), and DRI storage bin (DS). \par
\begin{figure*}[t!]
\centering
\includegraphics[width=1\textwidth]{Figures/F1.pdf}
\setlength{\abovecaptionskip}{-0.1cm}  
\setlength{\belowcaptionskip}{-0.1cm} 
\caption{The architecture of hydrogen-based direct reduced iron-electric arc furnace system integrated with methanol synthesis}
\captionsetup{justification=centering}
\vspace{-0.4cm}
\end{figure*}\par
\subsection{Operation models}
Over a one-year horizon discretized into hourly intervals, the time set is defined as ${T}=\{1,2,\ldots,8760\}$. The operation models of the proposed \ce{H2}-DRI-EAF-MeOH system are formulated as follows.
\\
1) Modeling of RES generation
\begin{subequations}
\begin{equation}
P_{\mathrm{RES}}^{t}
=P_{\mathrm{W}}^{t}+P_{\mathrm{S}}^{t}-P_{\mathrm{curt}}^{t},
\end{equation}
\begin{equation}
P_{\mathrm{W}}^{t}
=
\psi_{\mathrm{W}}^{t}
C_{\mathrm{W}},
\end{equation}
\begin{equation}
P_{\mathrm{S}}^{t}
=
\psi_{\mathrm{S}}^{t}
C_{\mathrm{S}}.
\end{equation}
\begin{equation}
0
\leq
P_{\mathrm{curt}}^{t}
\leq
P_{\mathrm{W}}^{t}
+
P_{\mathrm{S}}^{t},
\end{equation}
\end{subequations}
$P_{\mathrm{RES}}^{t}$ is the RES power output;
$P_{\mathrm{W}}^{t}$ and $P_{\mathrm{S}}^{t}$ represent the wind and PV power outputs, respectively;
$P_{\mathrm{curt}}^{t}$ is the curtailed power;
$\psi_{\mathrm{W}}^{t}$ and $\psi_{\mathrm{S}}^{t}$ are the wind and PV capacity factors, respectively;
$C_{\mathrm{W}}$ and $C_{\mathrm{S}}$ are the installed capacities of wind power and PV power, respectively.
\\
2) Modeling of hydrogen production and storage \par
The AE produces hydrogen through water electrolysis, and its operation is modeled as follows:
\begin{subequations}
\begin{equation}
M_{\mathrm{P2H}}^{t}
=
\psi_{\mathrm{P2H}}
P_{\mathrm{AE}}^{t},
\end{equation}
\begin{equation}
\psi_{\mathrm{AE,dn}}
C_{\mathrm{AE}}
\leq
P_{\mathrm{AE}}^{t}
\leq
\psi_{\mathrm{AE,up}}
C_{\mathrm{AE}}.
\end{equation}
\end{subequations}
$M_{\mathrm{P2H}}^{t}$ is the hydrogen production rate of the AE; $\psi_{\mathrm{P2H}}$ is the power-to-hydrogen conversion coefficient; 
$P_{\mathrm{AE}}^{t}$ is the power consumption of the AE; 
$C_{\mathrm{AE}}$ is the installed capacity of the AE; 
and $\psi_{\mathrm{AE,dn}}$ and $\psi_{\mathrm{AE,up}}$ are the lower and upper operating limits of the AE, respectively. \par
The HS balances hydrogen production and consumption, and its dynamics are formulated as follows:
\begin{subequations}
\begin{equation}
\mathrm{HS}^{t+1}
=
\mathrm{HS}^{t}
+
(M_{\mathrm{P2H}}^{t}
-
M_{\mathrm{HDRI}}^{t}
-
M_{\mathrm{MeOH,H_2}}^{t}) \Delta t,
\end{equation}
\begin{equation}
\psi_{\mathrm{HS,dn}}
C_{\mathrm{HS}}
\leq
\mathrm{HS}^{t}
\leq
\psi_{\mathrm{HS,up}}
C_{\mathrm{HS}},
\end{equation}
\begin{equation}
8
\left(
\psi_{\mathrm{THDRI}}
M_{\mathrm{SF},\max}
+
\psi_{\mathrm{H2M}}
M_{\mathrm{MSR},\max}
\right)
\leq
C_{\mathrm{HS}}.
\end{equation}
\begin{equation}
\left.
\mathrm{HS}^{t}
\right|_{t=1}
=
\left.
\mathrm{HS}^{t}
\right|_{t=|\mathrm{T}|+1}
=
\frac{1}{2}
C_{\mathrm{HS}}.
\end{equation}
\end{subequations}
$\mathrm{HS}^{t}$ is the hydrogen storage state; 
$M_{\mathrm{HDRI}}^{t}$ and $M_{\mathrm{MeOH,H_2}}^{t}$ are the hydrogen consumption rates for DRI production and MeOH synthesis, respectively;
$C_{\mathrm{HS}}$ is the HS capacity; 
$\psi_{\mathrm{HS,dn}}$ and $\psi_{\mathrm{HS,up}}$ are the lower and upper bounds of the HS, respectively; 
$\psi_{\mathrm{THDRI}}$ is the hydrogen demand coefficient for DRI production;  $\psi_{\mathrm{H2M}}$ is the hydrogen consumption coefficient of MeOH synthesis;
$M_{\mathrm{SF,max}}$ and $M_{\mathrm{MSR,max}}$ are the maximum hourly production rates of the SF and MSR, respectively.
\\
3) Modeling of BS \par
Unlike HS, the BS is subject to self-discharge and charging/discharging efficiency losses. Its operation is described as follows:
\begin{subequations}
\begin{equation}
\mathrm{BS}^{t+1}
=
\left(1-\psi_{\mathrm{BS}}\right)
\mathrm{BS}^{t}
+
(
\psi_{\mathrm{BS,ch}} P_{\mathrm{BS,ch}}^{t}
-
\frac{1}{\psi_{\mathrm{BS,dis}}}
P_{\mathrm{BS,dis}}^{t}
) \Delta t,
\end{equation}
\begin{equation}
\psi_{\mathrm{BS,dn}} C_{\mathrm{BS}}
\leq
\mathrm{BS}^{t}
\leq
\psi_{\mathrm{BS,up}} C_{\mathrm{BS}},
\end{equation}
\begin{equation}
\left.
\mathrm{BS}^{t}
\right|_{t=1}
=
\left.
\mathrm{BS}^{t}
\right|_{t=|\mathrm{T}|+1}
=
\frac{1}{2}
C_{\mathrm{BS}},
\end{equation}
\begin{equation}
0
\leq
P_{\mathrm{BS,ch}}^{t},
P_{\mathrm{BS,dis}}^{t}
\leq
\frac{C_{\mathrm{BS}}}{H_{\mathrm{BS}}}.
\end{equation}
\end{subequations}
$\mathrm{BS}^{t}$ is the stored energy; $\psi_{\mathrm{BS}}$ is the self-discharging rate; $\psi_{\mathrm{BS,ch}}$ and $\psi_{\mathrm{BS,dis}}$ are the charging and discharging efficiencies, respectively; 
$P_{\mathrm{BS,ch}}^{t}$ and $P_{\mathrm{BS,dis}}^{t}$ are the charging and discharging powers, respectively; 
$C_{\mathrm{BS}}$ is the capacity of the BS; 
$\psi_{\mathrm{BS,dn}}$ and $\psi_{\mathrm{BS,up}}$ are the lower and upper state-of-charge limits, respectively;
and $H_{\mathrm{BS}}$ is the rated discharge duration of the BS, which is set to 2 hours.
\\
4) Modeling of Lts \par
The Lts stores low-temperature heat from the AE, compressors, and the low-temperature electric heater and supplies heat for external demand and expander preheating. Its operation is modeled as follows:
\begin{subequations}
\begin{equation}
\mathrm{Lts}^{t+1}
=
\mathrm{Lts}^{t}
+
\left[
\left(
W_{\mathrm{AE}}^{t}
+
W_{\mathrm{comp}}^{t}
+
W_{\mathrm{Leh}}^{t}
\right)
\psi_{\mathrm{Lts,in}}
-
\frac{
W_{\mathrm{He}}^{t}
+
W_{\mathrm{Exp}}^{t}
}{
\psi_{\mathrm{Lts,out}}
}
\right]
\Delta t,
\end{equation}
\begin{equation}
\psi_{\mathrm{Lts,dn}}
C_{\mathrm{Lts}}
\leq
\mathrm{Lts}^{t}
\leq
\psi_{\mathrm{Lts,up}}
C_{\mathrm{Lts}},
\end{equation}
\begin{equation}
H_{\mathrm{Lts}}
W_{\mathrm{Exp},\max}
\leq
C_{\mathrm{Lts}},
\end{equation}
\begin{equation}
\left.
\mathrm{Lts}^{t}
\right|_{t=1}
=
\left.
\mathrm{Lts}^{t}
\right|_{t=|\mathrm{T}|+1}.
\end{equation}
\end{subequations}
$\mathrm{Lts}^{t}$ is the stored thermal energy; 
$C_{\mathrm{Lts}}$ is the thermal energy capacity of Lts; 
$\psi_{\mathrm{Lts,dn}}$ and $\psi_{\mathrm{Lts,up}}$ are the lower and upper state-of-charge limits of the Lts, respectively; 
$W_{\mathrm{AE}}^{t}$, $W_{\mathrm{comp}}^{t}$, and $W_{\mathrm{Leh}}^{t}$ are the heat inputs from AE, compressors, and the low-temperature electric heater, respectively; 
$\psi_{\mathrm{Lts,in}}$ and $\psi_{\mathrm{Lts,out}}$ are the charging and discharging efficiencies of the Lts, respectively; 
$W_{\mathrm{He}}^{t}$ and $W_{\mathrm{Exp}}^{t}$ are the external heat demand and expander preheating demand, respectively; 
$W_{\mathrm{Exp},\max}$ is the maximum expander preheating demand; 
$H_{Lts}$ is the minimum storage duration required for expander preheating.
\\
5) Modeling of SF \par
The SF reduces \ce{Fe2O3} to DRI using hydrogen at 1050~\textdegree C.
Before entering the SF, hydrogen is preheated using heat recovered from furnace top gas and waste heat boiler, and is then further heated by the high-temperature electric heater (HEH). The HEH is modeled as follows:
\begin{subequations}
\begin{equation}
P_{\mathrm{HEH}}^{t}
=
\psi_{\mathrm{Tth}}
M_{\mathrm{DRI}}^{t}
-
\frac{
\psi_{\mathrm{Th,re}}
\left(
W_{\mathrm{ftg}}^{t}
+
W_{\mathrm{whb}}^{t}
\right)
}{
\psi_{\mathrm{Eh}}
},
\end{equation}
\begin{equation}
W_{\mathrm{ftg}}^{t}
=
\psi_{\mathrm{ftg}}
M_{\mathrm{DRI}}^{t},
\end{equation}
\begin{equation}
W_{\mathrm{whb}}^{t}
=
\psi_{\mathrm{whb}}
M_{\mathrm{DS,in}}^{t}.
\end{equation}
\end{subequations}
$P_{\mathrm{HEH}}^{t}$ is the electric power consumption of the HEH; 
$M_{\mathrm{DRI}}^{t}$ is the DRI discharge rate of the SF; 
$W_{\mathrm{ftg}}^{t}$ and $W_{\mathrm{whb}}^{t}$ are the recovered heat from furnace top gas and from hot DRI by the waste heat boiler, respectively; 
$\psi_{\mathrm{Tth}}$ is the hydrogen-heating coefficient for DRI; 
$\psi_{\mathrm{ftg}}$ and $\psi_{\mathrm{whb}}$ are the heat recovery coefficients of furnace top gas and waste heat boiler, respectively; 
$\psi_{\mathrm{Th,re}}$ is the heat exchange coefficient; 
and $\psi_{\mathrm{Eh}}$ is the electric heating efficiency.\par
To capture the dynamic inertia of the hydrogen-based reduction of \ce{Fe2O3} in the SF, a first-order quasi-steady-state lag model \cite{ji2026demandresponsepotentialevaluation} is used to describe the hourly DRI discharge rate. 
\begin{subequations}
\begin{equation}
M_{\mathrm{DRI}}^{t+1}
=
M_{\mathrm{DQS}}^{t+1}
+
\left(
M_{\mathrm{DQS}}^{t}
-
M_{\mathrm{DQS}}^{t+1}
\right)
\alpha,
\quad
\alpha
=
e^{
-\frac{\Delta t}{T_{\mathrm{AE}}}
},
\end{equation}
\begin{equation}
\psi_{\mathrm{SF,min}}
M_{\mathrm{SF},\max}
-
\mathrm{M}
\left(
1
-
OP_{\mathrm{SF}}^{t}
\right)
\leq
M_{\mathrm{DQS}}^{t}
\leq
\psi_{\mathrm{SF,max}}
M_{\mathrm{SF},\max},
\end{equation}
\begin{equation}
0
\leq
M_{\mathrm{DRI}}^{t+1}
\leq
M_{\mathrm{SF},\max}
OP_{\mathrm{SF}}^{t},
\end{equation}
\begin{equation}
\psi_{\mathrm{SF,dr}}
M_{\mathrm{SF},\max}
\leq
M_{\mathrm{DQS}}^{t+1}
-
M_{\mathrm{DQS}}^{t}
\leq
\psi_{\mathrm{SF,ur}}
M_{\mathrm{SF},\max},
\end{equation}
\begin{equation}
M_{\mathrm{HDRI}}^{t}
=
\psi_{\mathrm{RHDRI}}
M_{\mathrm{DRI}}^{t},
\end{equation}
\begin{equation}
C_{\mathrm{SF}}
=
\sum_{t \in {T}}
M_{\mathrm{DRI}}^{t}\Delta t.
\end{equation}
\end{subequations}
$M_{\mathrm{DQS}}^{t}$ is the quasi-steady-state DRI discharge target of the SF; 
$\alpha$ is the lag factor; 
$T_{\mathrm{AE}}$ is the transition time constant; 
$\mathrm{M}$ is a sufficiently large positive number; 
$OP_{\mathrm{SF}}^{t}$ is a binary variable indicating the normal operating state of the SF;
$M_{\mathrm{SF},\max}$ is the maximum hourly production rate of the SF; 
$\psi_{\mathrm{SF,min}}$ and $\psi_{\mathrm{SF,max}}$ are the lower and upper production-rate coefficients of the SF, respectively; 
$\psi_{\mathrm{SF,dr}}$ and $\psi_{\mathrm{SF,ur}}$ are the down-ramp and up-ramp coefficients of the SF, respectively; 
$\psi_{\mathrm{RHDRI}}$ is the hydrogen consumption coefficient for DRI \cite{ji2025energy}; 
and $C_{\mathrm{SF}}$ is the annual DRI production capacity of the SF.\par
To describe the non-instantaneous start-up and shutdown processes of the SF, binary variables are introduced for start-up and shutdown commands, transition states, and normal operation. The corresponding constraints are formulated as follows:
\begin{subequations}
\begin{equation}
N_{\mathrm{SF,su}}
=
\left\lceil
\frac{
T_{\mathrm{SF,su}}
}{
\Delta t
}
\right\rceil,
\quad
N_{\mathrm{SF,sd}}
=
\left\lceil
\frac{
T_{\mathrm{SF,sd}}
}{
\Delta t
}
\right\rceil,
\end{equation}
\begin{equation}
SU_{\mathrm{SF}}^{t}
=
\sum_{\tau=\max\{1,t-N_{\mathrm{SF,su}}+1\}}^{t}
U_{\mathrm{SF}}^{\tau},
\end{equation}
\begin{equation}
SD_{\mathrm{SF}}^{t}
=
\sum_{\tau=\max\{1,t-N_{\mathrm{SF,sd}}+1\}}^{t}
D_{\mathrm{SF}}^{\tau},
\end{equation}
\begin{equation}
OP_{\mathrm{SF}}^{t}
=
OP_{\mathrm{SF}}^{t-1}
+
U_{\mathrm{SF}}^{t-N_{\mathrm{SF,su}}}
-
D_{\mathrm{SF}}^{t},
\end{equation}
\begin{equation}
OP_{\mathrm{SF}}^{t}
+
SU_{\mathrm{SF}}^{t}
+
SD_{\mathrm{SF}}^{t}
\leq
1,
\end{equation}
\begin{equation}
U_{\mathrm{SF}}^{t}
\leq
1
-
OP_{\mathrm{SF}}^{t-1}
-
SU_{\mathrm{SF}}^{t-1}
-
SD_{\mathrm{SF}}^{t-1},
\end{equation}
\begin{equation}
D_{\mathrm{SF}}^{t}
\leq
OP_{\mathrm{SF}}^{t-1},
\end{equation}
\begin{equation}
U_{\mathrm{SF}}^{t}
+
D_{\mathrm{SF}}^{t}
\leq
1.
\end{equation}
\end{subequations}
$U_{\mathrm{SF}}^{t}$ and $D_{\mathrm{SF}}^{t}$ are the start-up and shutdown commands of the SF, respectively; 
$SU_{\mathrm{SF}}^{t}$ and $SD_{\mathrm{SF}}^{t}$ are the start-up and shutdown transition states of the SF, respectively; 
$OP_{\mathrm{SF}}^{t}$ is the normal operating state of the SF; 
$T_{\mathrm{SF,su}}$ and $T_{\mathrm{SF,sd}}$ are the required start-up and shutdown transition times, respectively; 
and $N_{\mathrm{SF,su}}$ and $N_{\mathrm{SF,sd}}$ are the corresponding numbers of scheduling intervals.\par
The DS is used to mitigate temporal mismatch between DRI production in the SF and DRI consumption in the EAF, thereby decoupling upstream and downstream operations. Its inventory dynamics are formulated as follows:
\begin{subequations}
\begin{equation}
M_{\mathrm{DRI}}^{t}
=
M_{\mathrm{DRI,hot}}^{t}
+
M_{\mathrm{DS,in}}^{t},
\end{equation}
\begin{equation}
\mathrm{DS}^{t+1}
=
\mathrm{DS}^{t}
+
(M_{\mathrm{DS,in}}^{t}
-
M_{\mathrm{DRI,EAF}}^{t}
-
M_{\mathrm{DRI,s}}^{t})\Delta t,
\end{equation}
\begin{equation}
\psi_{\mathrm{DS,dn}}
C_{\mathrm{DS}}
\leq
\mathrm{DS}^{t}
\leq
\psi_{\mathrm{DS,up}}
C_{\mathrm{DS}},
\end{equation}
\begin{equation}
\left.
\mathrm{DS}^{t}
\right|_{t=1}
=
\left.
\mathrm{DS}^{t}
\right|_{t=|\mathrm{T}|+1}
=
\frac{1}{2}
C_{\mathrm{DS}}.
\end{equation}
\end{subequations}
$M_{\mathrm{DRI,hot}}^{t}$ is the hot DRI directly fed to the EAF; $M_{\mathrm{DS,in}}^{t}$ is the DRI sent to the DS; $\mathrm{DS}^{t}$ is the DRI storage state; $M_{\mathrm{DRI,EAF}}^{t}$ is the stored DRI supplied from the DS to the EAF; 
$M_{\mathrm{DRI,s}}^{t}$ is the stored DRI sold externally; $C_{\mathrm{DS}}$ is the DRI storage capacity; and $\psi_{\mathrm{DS,dn}}$ and $\psi_{\mathrm{DS,up}}$ are the lower and upper inventory limits of the DS, respectively.
\\
6) Modeling of EAF \par
The EAF melts DRI and scrap steel to crude steel. The EAF operating feasible region is adopted from Ref.~\cite{ji2026processawaredemandresponseevaluation} and represented in a compact form as follows:
\begin{subequations}
\begin{equation}
\boldsymbol{v}^{t}
=
\left[
x_{1}^{t},
x_{2}^{t},
x_{3}^{t},
P_{\mathrm{EAF}}^{t}
\right]^{\mathrm{T}},
\end{equation}
\begin{equation}
\boldsymbol{v}^{t}
\in
\Gamma_{\mathrm{EAF}},\quad \psi_{\mathrm{M,3}}x_{3}^{t}
\leq
0.2M_{\mathrm{steel}}^{t},
\end{equation}
\begin{equation}
C_{\mathrm{EAF}}
=
\sum_{t \in {T}}
M_{\mathrm{steel}}^{t} \Delta t.
\end{equation}
\end{subequations}
$\boldsymbol{v}^{t}$ is the EAF operating vector; 
$x_{i}^{t}$ is the sensible heat from the $i$-th material; 
$\psi_{\mathrm{M,3}}$ is the iron-mass coefficient for scrap steel, and 0.2 denotes the maximum allowable scrap-to-steel mass ratio; $P_{\mathrm{EAF}}^{t}$ is the EAF power; 
$\Gamma_{\mathrm{EAF}}$ denotes the EAF operating feasible region; 
$M_{\mathrm{steel}}^{t}$ is the crude steel production rate; 
and $C_{\mathrm{EAF}}$ is the annual crude steel production capacity.
\begin{subequations}
\begin{equation}
M_{\mathrm{EAF,CO_2}}^{t}
=
\psi_{\mathrm{C_c}} M_{\mathrm{Carbon}}^{t}
+
\psi_{\mathrm{C_l}} M_{\mathrm{limestone}}^{t}
+
\psi_{\mathrm{C_s}} M_{\mathrm{Scrap}}^{t},
\end{equation}
\begin{equation}
M_{\mathrm{Carbon}}^{t}
=
\psi_{\mathrm{C,DRI}} M_{\mathrm{steel}}^{t},
\end{equation}
\begin{equation}
M_{\mathrm{limestone}}^{t}
=
\psi_{\mathrm{limestone}} M_{\mathrm{steel}}^{t},
\end{equation}
\begin{equation}
M_{\mathrm{EAF,CO_2}}^{t}
=
M_{\mathrm{CCU}}^{t}
+
M_{\mathrm{emit}}^{t}.
\end{equation}
\end{subequations}
$M_{\mathrm{EAF,CO_2}}^{t}$ is the $\mathrm{CO_2}$ generation rate of the EAF;
$M_{\mathrm{Carbon}}^{t}$ ,$M_{\mathrm{limestone}}^{t}$, and $M_{\mathrm{Scrap}}^{t}$ are the consumption rates of carbon powder, limestone, and scrap steel, respectively; $\psi_{\mathrm{C_c}}$, $\psi_{\mathrm{C_l}}$, and $\psi_{\mathrm{C_s}}$ are the corresponding $\mathrm{CO_2}$ emission factors; $\psi_{\mathrm{limestone}}$ is the limestone consumption coefficient;
$M_{\mathrm{CCU}}^{t}$ is the captured $\mathrm{CO_2}$ rate; 
$M_{\mathrm{emit}}^{t}$ is the directly emitted $\mathrm{CO_2}$ rate.
\begin{subequations}
\begin{equation}
\mathrm{CS}^{t+1}
=
\mathrm{CS}^{t}
+
(M_{\mathrm{CCU}}^{t}
-
M_{\mathrm{MeOH,CO_2}}^{t})\Delta t,
\end{equation}
\begin{equation}
P_{\mathrm{CCR}}^{t}
=
\psi_{\mathrm{CCR}} M_{\mathrm{CCU}}^{t},
\end{equation}
\begin{equation}
M_{\mathrm{CCU}}^{t}
\leq
\mathrm{M} b_{\mathrm{CCU,MeOH}},
\end{equation}
\begin{equation}
\psi_{\mathrm{CS,dn}} C_{\mathrm{CS}}
\leq
\mathrm{CS}^{t}
\leq
\psi_{\mathrm{CS,up}} C_{\mathrm{CS}},
\end{equation}
\begin{equation}
8\psi_{\mathrm{C2M}}M_{\mathrm{MSR,max}}
\leq
C_{\mathrm{CS}},
\end{equation}
\begin{equation}
C_{\mathrm{CS}}
\leq
\mathrm{M} b_{\mathrm{CCU,MeOH}},
\end{equation}
\begin{equation}
\left.\mathrm{CS}^{t}\right|_{t=1}
=
\left.\mathrm{CS}^{t}\right|_{t=T+1}
=
\frac{1}{2}C_{\mathrm{CS}}.
\end{equation}
\end{subequations}
$\mathrm{CS}^{t}$ is the stored $\mathrm{CO_2}$;
$M_{\mathrm{CCU}}^{t}$ is the captured $\mathrm{CO_2}$ rate of the carbon capture and utilization (CCU); 
$M_{\mathrm{MeOH,CO_2}}^{t}$ is the $\mathrm{CO_2}$ consumption rate for MeOH synthesis; 
$P_{\mathrm{CCR}}^{t}$ is the electrical power consumption of CCU; 
$\psi_{\mathrm{CCR}}$ is the power consumption coefficient of CCU; 
$C_{\mathrm{CS}}$ is the $\mathrm{CO_2}$ storage capacity; 
$\psi_{\mathrm{CS,dn}}$ and $\psi_{\mathrm{CS,up}}$ are the lower and upper inventory limits of the CS, respectively; 
$b_{\mathrm{CCU,MeOH}}$ is a binary variable indicating the installation of the CCU and MSR systems; the CS capacity is required to cover 8 h of \ce{CO2} demand at the maximum MSR production rate; 
$\psi_{\mathrm{C2M}}$ is the $\mathrm{CO_2}$-to-MeOH demand coefficient; and $M_{\mathrm{MSR,max}}$ is the maximum production rate of the MSR. 
\\
7) Modeling of MSR \par
Given the operational similarities between methanol and ammonia synthesis \cite{xu2026flexibilizing} processes, a simplified scheduling model is formulated for the flexible operation of the MSR as follows:
\begin{subequations}
\begin{equation}
M_{\mathrm{MeOH,H_2}}^{t}
=
\psi_{\mathrm{H2M}} M_{\mathrm{MeOH}}^{t},
\end{equation}
\begin{equation}
M_{\mathrm{MeOH,CO_2}}^{t}
=
\psi_{\mathrm{C2M}} M_{\mathrm{MeOH}}^{t},
\end{equation}
\begin{equation}
\psi_{\mathrm{MSR,dn}} M_{\mathrm{MSR,max}}
\leq
M_{\mathrm{MeOH}}^{t}
\leq
\psi_{\mathrm{MSR,up}} M_{\mathrm{MSR,max}},
\end{equation}
\begin{equation}
\psi_{\mathrm{MSR,dr}} M_{\mathrm{MSR,max}}
\leq
M_{\mathrm{MeOH}}^{t+1}-M_{\mathrm{MeOH}}^{t}
\leq
\psi_{\mathrm{MSR,ur}} M_{\mathrm{MSR,max}},
\end{equation}
\begin{equation}
C_{\mathrm{MSR}}
=
\sum_{t\in{T}} M_{\mathrm{MeOH}}^{t} \Delta t,
\end{equation}
\begin{equation}
M_{\mathrm{MSR,max}}
\leq
\mathrm{M} b_{\mathrm{CCU,MeOH}},
\end{equation}
\begin{equation}
C_{\mathrm{CS}}
+
C_{\mathrm{MSR}}
+
\mathrm{M}\left(1-b_{\mathrm{CCU,MeOH}}\right)
\geq
\psi_{\mathrm{eps}}.
\end{equation}
\end{subequations}
$M_{\mathrm{MeOH}}^{t}$ is the MeOH production rate of the MSR; 
$\psi_{\mathrm{MSR,dn}}$ and $\psi_{\mathrm{MSR,up}}$ are the lower and upper production limits of the MSR, respectively;
$\psi_{\mathrm{MSR,dr}}$ and $\psi_{\mathrm{MSR,ur}}$ are the down-ramp and up-ramp rate coefficients of the MSR, respectively; 
$C_{\mathrm{MSR}}$ is the annual MeOH production capacity;
and $\psi_{\mathrm{eps}}$ is a small positive constant.
\\
8) Modeling of system power balance \par
The hourly power balance of the proposed \ce{H2}-DRI-EAF-MeOH system is formulated as follows:
\begin{subequations}
\begin{equation}
\begin{split}
P_{\mathrm{W}}^{t}
+
&
P_{\mathrm{S}}^{t}
+
P_{\mathrm{Buy}}^{t}
+
P_{\mathrm{exp}}^{t}
+
P_{\mathrm{BS,dis}}^{t}=P_{\mathrm{Sell}}^{t}
+
P_{\mathrm{AE}}^{t}
\\
&\quad
+
P_{\mathrm{BS,ch}}^{t}
+
P_{\mathrm{Leh}}^{t}
+
P_{\mathrm{comp}}^{t}
+
P_{\mathrm{HEH}}^{t}
+
P_{\mathrm{EAF}}^{t}
+
P_{\mathrm{CCR}}^{t}
+
P_{\mathrm{curt}}^{t},
\end{split}
\end{equation}
\begin{equation}
\sum_{t\in{T}}P_{\mathrm{Sell}}^{t}
\leq
\psi_{\mathrm{Sell}}
\sum_{t\in{T}}
\left(
P_{\mathrm{W}}^{t}
+
P_{\mathrm{S}}^{t}
\right),
\end{equation}
\begin{equation}
\sum_{t\in{T}}P_{\mathrm{Buy}}^{t}
\leq
\psi_{\mathrm{Grid,Buy,ratio}}
\sum_{t\in{T}}P_{\mathrm{load}}^{t}
\end{equation}
\begin{equation}
P_{\mathrm{load}}^{t}
=
P_{\mathrm{AE}}^{t}
+
P_{\mathrm{BS,ch}}^{t}
+
P_{\mathrm{Leh}}^{t}
+
P_{\mathrm{comp}}^{t}
+
P_{\mathrm{HEH}}^{t}
+
P_{\mathrm{EAF}}^{t}
+
P_{\mathrm{CCR}}^{t}
\end{equation}
\begin{equation}
0
\leq
P_{\mathrm{Sell}}^{t}
\leq
b_{\mathrm{grid}}^{t}\mathrm{M}
\end{equation}
\begin{equation}
0
\leq
P_{\mathrm{Buy}}^{t}
\leq
\left(1-b_{\mathrm{grid}}^{t}\right)\mathrm{M}
\end{equation}
\begin{equation}
P_{\mathrm{exp}}^{t}
=
\psi_{\mathrm{Eexp}}
\left(
M_{\mathrm{DRI}}^{t}\psi_{\mathrm{THDRI}}
+
M_{\mathrm{MeOH,H_2}}^{t}
\right)
\end{equation}
\begin{equation}
P_{\mathrm{comp}}^{t}
=
\psi_{\mathrm{ECcomp}}M_{\mathrm{DRI}}^{t}
+
\psi_{\mathrm{ERcomp}}M_{\mathrm{P2H}}^{t}
\end{equation}
\end{subequations}
$P_{\mathrm{Buy}}^{t}$ and $P_{\mathrm{Sell}}^{t}$ are the powers purchased from and sold to the external grid, respectively;
$P_{\mathrm{exp}}^{t}$ is the power output of the expander; 
$P_{\mathrm{Leh}}^{t}$ is the power consumption of the low-temperature electric heater; 
$P_{\mathrm{comp}}^{t}$ is the power consumption of the compressors;
$\psi_{\mathrm{Sell}}$ and $\psi_{\mathrm{Grid,Buy,ratio}}$ are the grid-export and grid import limit ratios, respectively; 
$b_{\mathrm{grid}}^{t}$ is the binary variable to prevent simultaneous electricity purchase and sale; 
$P_{\mathrm{load}}^{t}$ is the total electrical load of the system; 
$\psi_{\mathrm{Eexp}}$ is the power generation coefficient of the expansion;
$\psi_{\mathrm{ECcomp}}$ and $\psi_{\mathrm{ERcomp}}$ are the power consumption coefficients of circulating compressor and reducing compressor, respectively. 
\section{The planning model for the integrated \ce{H2}-DRI-EAF-MeOH system}
\subsection{Objective function and economic model}
To evaluate the overall investment return of the integrated \ce{H2}-DRI-EAF-MeOH system, the ANROI is adopted as the optimization objective:
\begin{equation}
\max_{\boldsymbol{z}}\ \mathrm{ANROI}
=
\frac{\mathrm{ANR}}{\mathrm{TCI}}
\end{equation}\par
The ANR is defined as the difference between total annual revenue and total annualized cost:
\begin{equation}
\begin{split}
\mathrm{ANR}
={}&
AR_{\mathrm{Steel}}
+
AR_{\mathrm{DRI}}
+
AR_{\mathrm{MeOH}}
+
AR_{\mathrm{Elec}}
+
AR_{\mathrm{He}}
+
AR_{\mathrm{Ce}}
\\
&-
AC_{\mathrm{Inv}}
-
AC_{\mathrm{OM}}
-
AC_{\mathrm{Deg}}
-
AC_{\mathrm{Ore}}
-
AC_{\mathrm{Scrap}}
-
AC_{\mathrm{Wat}} .
\end{split}
\end{equation}\par
The ANR revenue streams comprise product sales, net electricity trading, heat supply, and carbon allowance:
\begin{subequations}
\begin{equation}
AR_{\mathrm{Steel}}
=
\psi_{\mathrm{Steel,pri}}
\sum_{t\in{T}}
M_{\mathrm{Steel}}^{t}
\end{equation}
\begin{equation}
AR_{\mathrm{DRI}}
=
\psi_{\mathrm{DRI,pri}}
\sum_{t\in{T}}
M_{\mathrm{DRI,s}}^{t}
\end{equation}
\begin{equation}
AR_{\mathrm{MeOH}}
=
\psi_{\mathrm{MeOH,pri}}
\sum_{t\in{T}}
M_{\mathrm{MeOH}}^{t}
\end{equation}
\begin{equation}
AR_{\mathrm{Elec}}
={}
\sum_{t\in{T}}
(\psi_{\mathrm{Sell,pri}}^{t}
P_{\mathrm{Sell}}^{t}
-
\psi_{\mathrm{Buy,pri}}^{t}
P_{\mathrm{Buy}}^{t})-
\sum_{\tau \in\mathcal{M}}
\psi_{\mathrm{Cap}}
P_{\mathrm{dem}}^{\tau}
\end{equation}
\begin{equation}
P_{\mathrm{dem}}^{\tau}
\geq
P_{\mathrm{Buy}}^{t},
\qquad
\forall t\in{T}^{\tau},
\quad
\forall \tau\in{T}^{\mathrm{Mon}}
\end{equation}
\begin{equation}
AR_{\mathrm{He}}
=
\psi_{\mathrm{He,pri}}
\sum_{t\in {T}}
W_{\mathrm{He}}^{t}
\end{equation}
\begin{equation}
AR_{\mathrm{Ce}}
=
\psi_{\mathrm{Cb,pri}}
\sum_{t\in{T}}
\left(
\psi_{\mathrm{Cq,DRI}}
M_{\mathrm{DRI}}^{t}
+
\psi_{\mathrm{Cq,Steel}}
M_{\mathrm{Steel}}^{t}
-
M_{\mathrm{emit}}^{t}
\right)
\end{equation}
\end{subequations}
$\boldsymbol{z}$ denotes the decision variables; 
$\mathrm{ANROI}$ is the annualized net return on investment;
$\mathrm{ANR}$ is the annualized net return;
$\mathrm{TCI}$ is the total capital investment;
$AR_{\mathrm{Steel}}$, $AR_{\mathrm{DRI}}$, and $AR_{\mathrm{MeOH}}$ are the annual revenues from crude steel, DRI, and MeOH, respectively;
$AR_{\mathrm{Elec}}$ is the annual net revenue from grid electricity trading; 
$AR_{\mathrm{He}}$ is the annual heat supply revenue;
$AR_{\mathrm{Ce}}$ is the annual carbon allowance revenue;
$AC_{\mathrm{Inv}}$ is the annualized capital cost;
$AC_{\mathrm{OM}}$ is the operation and maintenance cost;
$AC_{\mathrm{Deg}}$ is the battery degradation cost;
and $AC_{\mathrm{Ore}}$, $AC_{\mathrm{Scrap}}$, and $AC_{\mathrm{Wat}}$ are the annual costs of iron ore, scrap steel, and industrial water, respectively.
$\psi_{\mathrm{Steel,pri}}$, $\psi_{\mathrm{DRI,pri}}$, and $\psi_{\mathrm{MeOH,pri}}$ are the prices of crude steel, DRI, and MeOH, respectively;
$\psi_{\mathrm{Sell,pri}}^{t}$ and $\psi_{\mathrm{Buy,pri}}^{t}$ are the electricity selling and purchasing prices, respectively; 
$\psi_{\mathrm{Cap}}$ is the capacity price;
$P_{\mathrm{dem}}^{\tau}$ is the monthly peak electricity demand in month $\tau$;
$\mathrm{Mon}$ is the set of months;
$\psi_{\mathrm{He,pri}}$ is the heat selling price;
$\psi_{\mathrm{Cb,pri}}$ is the carbon allowance price;
and $\psi_{\mathrm{Cq,DRI}}$ and $\psi_{\mathrm{Cq,Steel}}$ are the carbon allowances allocated to the ironmaking and steelmaking processes, respectively; the two allowances are mutually exclusive and do not involve double counting. \par
The annualized capital cost is calculated by applying the capital recovery factor to the total capital investment:
\begin{subequations}
\begin{equation}
AC_{\mathrm{Inv}}
=
\mathrm{CRF}(r,Y)
\mathrm{TCI}
\end{equation}
\begin{equation}
\mathrm{CRF}(r,Y)
=
\frac{
r(1+r)^{Y}
}{
(1+r)^{Y}-1
}
\end{equation}
\begin{equation}
\mathrm{TCI}
=
\sum_{j\in\Omega_{\mathrm{Inv}}}
I_{\mathrm{Inv}}^{j}C^{j}
+
\mathrm{TCI}_{\mathrm{DCP}}
+
\mathrm{TCI}_{\mathrm{ATS}}
\end{equation}
\begin{equation}
\Omega_{\mathrm{Inv}}
=
\left\{
\mathrm{W,S,AE,BS,HS,Lts,CS,DS}
\right\}
\end{equation}
\end{subequations} \par
For the SF, EAF, and MSR, the capital investments associated with annual production capacity and maximum hourly production rate are as follows:
\begin{subequations}
\begin{equation}
\mathrm{TCI}_{\mathrm{DCP}}
=
\sum_{k\in\Omega_{\mathrm{DCP}}}
(
c_{\mathrm{Inv}}^{k}
(
M_{k,\max}
-
\frac{C^{k}}{8000}
)
+
I_{\mathrm{Inv}}^{k}C^{k})
\end{equation}
\begin{equation}
\Omega_{\mathrm{DCP}}
=
\left\{
\mathrm{SF,EAF,MSR}
\right\}
\end{equation}
\begin{equation}
\frac{C^{k}}{8000}
\leq
M_{k,\max}
\leq
\frac{C^{k}}{4000},
\forall k\in\Omega_{\mathrm{DCP}}
\end{equation}
\end{subequations}\par
The investment costs for the remaining auxiliary equipment are calculated as follows:
\begin{subequations}
\begin{equation}
\mathrm{TCI}_{\mathrm{ATS}}
=
\sum_{l\in\Omega_{\mathrm{ATS}}}
I_{l}
\end{equation}
\begin{equation}
\Omega_{\mathrm{ATS}}
=
\left\{
\mathrm{Comp,Exp,HEH,Leh}
\right\}
\end{equation}
\end{subequations}\par
Additionally, the costs of operation and maintenance, battery degradation, and raw materials are calculated as follows:
\begin{subequations}
\begin{equation}
AC_{\mathrm{OM}}
=
\sum_{j\in\Omega_{\mathrm{Inv}}}
\psi_{\mathrm{OM}}^{j}
I_{\mathrm{Inv}}^{j}C^{j}
+ \sum_{k\in\Omega_{\mathrm{DCP}}}
\psi_{\mathrm{OM}}^{k}
I_{\mathrm{Inv}}^{k}C^{k}
\end{equation}
\begin{equation}
AC_{\mathrm{Deg}}
=
\psi_{\mathrm{Deg}}
\sum_{t\in{T}}
P_{\mathrm{BS,dis}}^{t}
\end{equation}
\begin{equation}
AC_{\mathrm{Ore}}
=
\psi_{\mathrm{OD}}
\sum_{t\in{T}}
M_{\mathrm{DRI}}^{t}
\end{equation}
\begin{equation}
AC_{\mathrm{Scrap}}
=
\psi_{\mathrm{Sp,prc}}
\sum_{t\in{T}}
M_{\mathrm{Scrap}}^{t}
\end{equation}
\begin{equation}
AC_{\mathrm{Wa}}
=
\psi_{\mathrm{Wa,pri}}
\gamma_{\mathrm{W2H}}
\sum_{t\in{T}}
M_{\mathrm{P2H}}^{t}
\end{equation}
\end{subequations}
$\mathrm{CRF}(r,Y)$ is the capital recovery factor;
$r$ is the discount rate;
$Y$ is the project lifetime;
$\Omega_{\mathrm{Inv}}$ is the set of units with capacity-based investment costs;
$I_{\mathrm{Inv}}^{j}$ is the unit investment cost of unit $j$;
$C^{j}$ is the installed capacity of unit $j$;
$\mathrm{TCI}_{\mathrm{DCP}}$ is the investment cost of production units with decoupled annual capacity and hourly production-rate limits;
$\Omega_{\mathrm{DCP}}$ is the set of these production units; 
$I_{\mathrm{Inv}}^{k}$ is the unit investment cost of annual production capacity for unit $k$;
$c_{\mathrm{Inv}}^{k}$ is the unit investment cost associated with the maximum hourly production rate of unit $k$;
$M_{k,\max}$ is the maximum hourly production rate;
$\mathrm{TCI}_{\mathrm{ATS}}$ is the total investment in auxiliary units;
and $\Omega_{\mathrm{ATS}}$ is the set of auxiliary units.
$\psi_{\mathrm{OM}}$ is the operation and maintenance cost coefficient;
$\psi_{\mathrm{Deg}}$ is the battery degradation cost coefficient; 
$\psi_{\mathrm{OD}}$ is the iron ore cost per tonne of DRI;
$\psi_{\mathrm{Scrap}}$ is the scrap steel price;
$\psi_{\mathrm{Wa,pri}}$ is the industrial water price; and $\gamma_{\mathrm{W2H}}$ is the water consumption per tonne of hydrogen.
The ANROI is defined as a fractional objective that jointly incorporates multiple revenue streams and total capital investment. Although both $\mathrm{ANR}$ and $\mathrm{TCI}$ are linear functions of the decision variables, their ratio makes the optimization problem nonlinear and nonconvex, so it cannot be directly handled by standard MILP solvers. Accordingly, the proposed model is formulated as a MILFP problem and solved through a sequence of parameterized MILP subproblems.
\subsection{MILFP formulation and Dinkelbach solution method}
Based on the operational constraints and economic model, the optimal sizing and hourly operation decisions are integrated into a MILFP model. Its compact formulation is given as follows:
\begin{subequations}
\begin{equation}
\max_{\boldsymbol{z}}
\quad
\mathrm{ANROI}(\boldsymbol{z})
=
\frac{
\mathrm{ANR}(\boldsymbol{z})
}{
\mathrm{TCI}(\boldsymbol{z})
},
\end{equation}
\begin{equation}
\mathrm{s.t.}
\quad
\boldsymbol{z}
\in
\mathcal{R},
\end{equation}
\begin{equation}
\mathcal{R}
=
\left\{
\boldsymbol{z}
\mid
\text{constraints (1)--(21)}
\right\}.
\end{equation}
\end{subequations}
$\boldsymbol{z}$ denotes the complete decision-variable vector, and $\mathcal{R}$ denotes the feasible region defined by constraints (1)--(21).\par
To solve the ANROI-based MILFP model, the Dinkelbach method \cite{you2009dinkelbach,yu2025novel} is adopted to transform the fractional objective into a sequence of parameterized MILP subproblems. For a given parameter $\lambda$, the auxiliary function is defined as
\begin{equation}
\phi(\lambda)
=
\max_{\boldsymbol{z}\in\mathcal{R}}
\left\{
\mathrm{ANR}(\boldsymbol{z})
-
\lambda
\mathrm{TCI}(\boldsymbol{z})
\right\}.
\end{equation} \par
For a fixed $\lambda$, the objective function is linear and the feasible region remains unchanged; hence, each subproblem can be solved as a standard MILP problem. The optimal ANROI $\lambda^{*}$ is obtained when the auxiliary function satisfies
\begin{equation}
\phi(\lambda^{*})=0.
\end{equation}
 If $\phi(\lambda)>0$, the current value of $\lambda$ is lower than the optimal ANROI; conversely, if $\phi(\lambda)<0$, it is higher than the optimal ANROI.
The Dinkelbach-based solution procedure is summarized in pseudocode in Algorithm~\ref{alg:DKB}.
\begin{algorithm}[t]
\caption{Dinkelbach-based algorithm for the MILFP model}
\label{alg:DKB}
\begin{algorithmic}[1]
\Require $\lambda_1=0$, $\varepsilon=10^{-4}$, and $I_{\max}=50$
\Ensure Optimal solution $\boldsymbol{z}^{*}$ and ANROI $\lambda^{*}$
\For{$n=1,\ldots,I_{\max}$}
    \State Solve MILP subproblem for $\lambda_n$
    \[
    \boldsymbol{z}_n
    \in
    \arg\max_{\boldsymbol{z}\in\mathcal{R}}
    \left\{
    \mathrm{ANR}(\boldsymbol{z})
    -
    \lambda_n\mathrm{TCI}(\boldsymbol{z})
    \right\}
    \]
    \State Compute the residual
    \[
    r_n=
    \mathrm{ANR}(\boldsymbol{z}_n)
    -
    \lambda_n\mathrm{TCI}(\boldsymbol{z}_n)
    \]
    \If{$|r_n|\leq\varepsilon$}
        \State \Return $\boldsymbol{z}^{*}\gets\boldsymbol{z}_n,\ 
        \lambda^{*}\gets\lambda_n$
    \EndIf
    \State Update
    \[
    \lambda_{n+1}
    \gets
    \frac{\mathrm{ANR}(\boldsymbol{z}_n)}
    {\mathrm{TCI}(\boldsymbol{z}_n)}
    \]
\EndFor
\State \Return $\boldsymbol{z}^{*}\gets\boldsymbol{z}_n,\ 
\lambda^{*}\gets\lambda_{n+1}$
\end{algorithmic}
\end{algorithm}
\subsection{Mathematical characterization of hourly production rate flexibility}
For the SF, EAF, and MSR, annual production capacities are decoupled from maximum hourly production rates and treated as separate sizing variables, allowing annual production scale and hourly production capability to be independently determined. To characterize the mathematical role of hourly production-rate flexibility, the following lemmas are introduced: 
\begin{lemma}\textbf{Expansion of the temporal operating region}
\par 
Let $\Omega_{\mathrm{AP}}$ denote the set of adjustable production units. $\mathcal{F}\left(\boldsymbol{C},
\boldsymbol{M}_{\max}\right)$ denotes the multi-period feasible region for fixed annual production capacities and fixed capacities of all other system components. 
Suppose that increasing $M_{k,\max}$ only relaxes the upper bound on the hourly production of unit $k$, while the annual production capacities and all other constraint parameters remain unchanged. \\
If
\[
M_{k,\max}^{(1)}
\leq
M_{k,\max}^{(2)},
\qquad
\forall k\in\Omega_{\mathrm{AP}},
\]
then
\[
\mathcal{F}
\left(
\boldsymbol{C},
\boldsymbol{M}_{\max}^{(1)}
\right)
\subseteq
\mathcal{F}
\left(
\boldsymbol{C},
\boldsymbol{M}_{\max}^{(2)}
\right).
\]
\end{lemma}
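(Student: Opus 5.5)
The argument is a direct set-inclusion argument built on the lemma's hypothesis. We fix $\boldsymbol{C}$ and all non-production capacities, and take an arbitrary operating trajectory $\boldsymbol{x}\in\mathcal{F}(\boldsymbol{C},\boldsymbol{M}_{\max}^{(1)})$, meaning hourly flows, storage states, binaries, grid exchange and so on. We then show that the same $\boldsymbol{x}$ satisfies every constraint defining $\mathcal{F}(\boldsymbol{C},\boldsymbol{M}_{\max}^{(2)})$.

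\textbf{Step 1: partition the constraints.} We split the constraints in (1)--(21) into two classes:
\begin{enumerate}
\item[(a)] constraints that do not involve $M_{k,\max}$, namely the RES, AE and BS models, the storage dynamics, the power balance, the annual-capacity equalities $C_{\mathrm{SF}}=\sum_t M_{\mathrm{DRI}}^t\Delta t$, $C_{\mathrm{EAF}}$ and $C_{\mathrm{MSR}}$, and the start-up/shut-down logic;
\item[(b)] constraints in which $M_{k,\max}$ appears.
\end{enumerate}
Constraints in class (a) are identical under both parameter vectors, so $\boldsymbol{x}$ satisfies them trivially.

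\textbf{Step 2: handle the $M_{k,\max}$-dependent constraints.} By the lemma's hypothesis, increasing $M_{k,\max}$ only relaxes hourly upper bounds. Concretely, class (b) is treated as consisting of inequalities of the form $g_k(\boldsymbol{x})\le a_k M_{k,\max}$ with $a_k\ge 0$. Examples are $M_{\mathrm{DQS}}^t\le\psi_{\mathrm{SF,max}}M_{\mathrm{SF,max}}$, $M_{\mathrm{DRI}}^{t+1}\le M_{\mathrm{SF,max}}OP_{\mathrm{SF}}^t$, $M_{\mathrm{MeOH}}^t\le\psi_{\mathrm{MSR,up}}M_{\mathrm{MSR,max}}$, the up-ramp limits, and the implicit EAF rate cap inside $\Gamma_{\mathrm{EAF}}$. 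For each such inequality the chain
\[
g_k(\boldsymbol{x})\le a_kM_{k,\max}^{(1)}\le a_kM_{k,\max}^{(2)}
\]
holds, using $OP^t\in\{0,1\}$ where it appears. This gives membership in $\mathcal{F}(\boldsymbol{C},\boldsymbol{M}_{\max}^{(2)})$.

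\textbf{Main obstacle.} Several constraints involve $M_{k,\max}$ in ways that do not relax when it increases:
\begin{itemize}
\item lower operating bounds ($\psi_{\mathrm{SF,min}}M_{\mathrm{SF,max}}$, $\psi_{\mathrm{MSR,dn}}M_{\mathrm{MSR,max}}$);
\item down-ramp limits (if $\psi_{\mathrm{SF,dr}}$ is negative they relax, but a tighter minimum load does not);
\item the storage sizing couplings $8(\cdot)M_{\max}\le C_{\mathrm{HS}}$ and $8\psi_{\mathrm{C2M}}M_{\mathrm{MSR,max}}\le C_{\mathrm{CS}}$, which tighten when $M_{\max}$ grows with $C_{\mathrm{HS}}$ and $C_{\mathrm{CS}}$ fixed;
\item the box $C^k/8000\le M_{k,\max}\le C^k/4000$.
\end{itemize}
The plan is to use the lemma's standing assumption explicitly to exclude these. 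I would state that, within the lemma, $\mathcal{F}$ is understood with the minimum-load and storage-sizing parameters held at their original (absolute) values, or equivalently that these constraints are treated as part of the "other constraint parameters" that remain unchanged. The box constraint is handled by requiring both $\boldsymbol{M}_{\max}^{(1)}$ and $\boldsymbol{M}_{\max}^{(2)}$ to be admissible.

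For the minimum-load constraints, the first thing I would try is to argue that $\psi_{\mathrm{SF,min}}M_{\mathrm{SF,max}}-\mathrm{M}(1-OP)$ is rendered inactive by the big-M term when the unit is off. When the unit is on, the lower bound is interpreted as an absolute technical minimum, which the hypothesis holds fixed. With this reading, only monotone upper-bound relaxations remain, and the inclusion follows from Steps 1 and 2.
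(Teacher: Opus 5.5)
Your proposal is correct and follows the paper's route. Take any schedule feasible under $\boldsymbol{M}_{\max}^{(1)}$, chain the hourly upper bounds through $M_{k,\max}^{(1)}\le M_{k,\max}^{(2)}$, and use the hypothesis that every other constraint is unchanged to conclude membership in $\mathcal{F}(\boldsymbol{C},\boldsymbol{M}_{\max}^{(2)})$. Your extra observation is accurate and is something the paper's proof passes over silently: in the model as written, the minimum-load terms $\psi_{\mathrm{SF,min}}M_{\mathrm{SF,max}}$ and $\psi_{\mathrm{MSR,dn}}M_{\mathrm{MSR,max}}$, and the storage-sizing couplings with fixed $C_{\mathrm{HS}}$ and $C_{\mathrm{CS}}$, tighten as $M_{k,\max}$ grows, so the inclusion genuinely rests on that hypothesis.
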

\par\noindent\textbf{Proof.}\par
For any schedule feasible under
$\boldsymbol{M}_{\max}^{(1)}$, the hourly production variables satisfy
\[
M_k^t
\leq
M_{k,\max}^{(1)}
\leq
M_{k,\max}^{(2)},
\qquad
\forall k\in\Omega_{\mathrm{AP}},
\quad
\forall t\in{T}.
\]
Therefore, the same schedule also satisfies the relaxed hourly production limits associated with
$\boldsymbol{M}_{\max}^{(2)}$. Since the annual production constraints and all other operating constraints remain unchanged, every point in
$\mathcal{F}(\boldsymbol{C},\boldsymbol{M}_{\max}^{(1)})$
also belongs to
$\mathcal{F}(\boldsymbol{C},\boldsymbol{M}_{\max}^{(2)})$.
Thus, a higher hourly production limit expands temporal flexibility without changing annual capacity.
\begin{lemma}\textbf{RES matching properties of hourly production-rate flexibility}
\par Assume that the annual production capacities of the SF, EAF, and MSR are fixed, while their maximum hourly production rates are treated as adjustable variables. The available on-site renewable power is defined as
\begin{equation}
P_{\mathrm{R}}^{t}
=
P_{\mathrm{W}}^{t}
+
P_{\mathrm{S}}^{t}.
\end{equation}
For each unit $q\in\Omega_{\mathrm{AP}}$, $M_{q,\max}$ is the maximum hourly production rate, $C_q$ is the annual production capacity, and $\ell_q$ is the equivalent electricity consumption per unit of production. The aggregate upper bound on the hourly production load is defined as
\begin{equation}
x
=
P_{\mathrm{Prod,max}}
=
\sum_{q\in\Omega_{\mathrm{AP}}}
\ell_q M_{q,\max}.
\end{equation}
The annual equivalent production electricity demand is
\begin{equation}
E_{\mathrm{Prod}}
=
\sum_{q\in\Omega_{\mathrm{AP}}}
\ell_q C_q.
\end{equation}\par
Under a simplified aggregate representation without storage-based energy shifting, the theoretical upper bound on the amount of on-site RES electricity directly matched by production loads is jointly constrained by the annual production electricity demand and the aggregate hourly production-load bound, and is expressed as follows: 
\begin{equation}
E_{\mathrm{match}}(x)
=
\min
\left\{
E_{\mathrm{Prod}},
\,
\sum_{t\in\mathrm{T}}
\min
\left(
P_{\mathrm{R}}^{t},
x
\right)\Delta t
\right\}.
\end{equation}
Given the above definitions, $E_{\mathrm{match}}(x)$ is nondecreasing, piecewise linear, and concave with respect to the hourly production load boundary $x$. Before the annual production electricity demand is fully matched, its right-hand derivative is
\begin{equation}
\frac{\mathrm{d}^{+}E_{\mathrm{match}}(x)}
{\mathrm{d}x}
=
N
\left(
P_{\mathrm{R}}^{t}>x
\right) \Delta t,
\end{equation}
$N(P_{\mathrm{R}}^{t}>x)$ denotes the number of periods in which the available on-site renewable power exceeds the hourly production-load boundary $x$.
\par\noindent\textbf{Proof.}\\
Let
\begin{equation}
G(x)
=
\sum_{t\in\mathrm{T}}
\min
\left(
P_{\mathrm{R}}^{t},x
\right) \Delta t.
\end{equation}
For each period $t$, $P_{\mathrm{R}}^{t}$ is fixed with respect to $x$. Therefore, the function $\min(P_{\mathrm{R}}^{t},x)$ can be written as
\begin{equation}
\min(P_{\mathrm{R}}^{t},x)
=
\begin{cases}
x, & x<P_{\mathrm{R}}^{t},\\
P_{\mathrm{R}}^{t}, & x\geq P_{\mathrm{R}}^{t}.
\end{cases}
\end{equation}
Therefore, it is nondecreasing and piecewise linear in $x$. Moreover, its right-hand slope decreases from 1 to 0 when $x$ reaches $P_{\mathrm{R}}^{t}$, which implies concavity.
\begin{equation}
E_{\mathrm{match}}(x)
=
\min
\left\{
E_{\mathrm{Prod}},
G(x)
\right\},
\end{equation}
Since $G(x)$ is nondecreasing, piecewise linear, and concave, truncating it at the constant upper bound $E_{\mathrm{Prod}}$ preserves these properties. Specifically, $E_{\mathrm{match}}(x)$ follows $G(x)$ before the annual production electricity demand is fully matched and remains constant thereafter. Therefore, $E_{\mathrm{match}}(x)$ is also nondecreasing, piecewise linear, and concave in $x$. When $G(x)<E_{\mathrm{Prod}}$, the annual production electricity demand has not been fully matched, and thus $E_{\mathrm{match}}(x)=G(x)$. Increasing $x$ improves RES matching only in periods where the available on-site renewable power exceeds the current production-load boundary, i.e.,
\begin{equation}
P_{\mathrm{R}}^{t}>x.
\end{equation}
Therefore, the right-hand derivative of $E_{\mathrm{match}}(x)$ equals the number of such periods:
\begin{equation}
\frac{\mathrm{d}^{+}E_{\mathrm{match}}(x)}
{\mathrm{d}x}
=
N
\left(
P_{\mathrm{R}}^{t}>x
\right) \Delta t.
\end{equation}
As $x$ increases, fewer periods satisfy $P_{\mathrm{R}}^{t}>x$, so the marginal matching benefit decreases. Once $G(x)\geq E_{\mathrm{Prod}}$, the annual production electricity demand has been fully matched, and further increasing $x$ cannot increase $E_{\mathrm{match}}(x)$. This completes the proof.
\end{lemma}
\begin{lemma}\textbf{Investment-efficiency condition for increasing hourly production rate}
\par Let $\boldsymbol{z}^{*}$ and $\lambda^{*}$ denote the optimal solution and the optimal ANROI obtained from the Dinkelbach-based algorithm, respectively. Consider a feasible solution $\boldsymbol{z}'$ generated by a marginal increase in the maximum hourly production rate of production unit $k$. The corresponding changes in ANR and TCI are defined as
\begin{equation}
\Delta\mathrm{ANR}^{k}
=
\mathrm{ANR}(\boldsymbol{z}')
-
\mathrm{ANR}(\boldsymbol{z}^{*}),
\end{equation}
and
\begin{equation}
\Delta\mathrm{TCI}^{k}
=
\mathrm{TCI}(\boldsymbol{z}')
-
\mathrm{TCI}(\boldsymbol{z}^{*}).
\end{equation}
At the optimal sizing, the resulting changes satisfy
\begin{equation}
\Delta\mathrm{ANR}^{k}
-
\lambda^{*}
\Delta\mathrm{TCI}^{k}
\leq
\varepsilon,
\end{equation}
where $\varepsilon$ denotes the optimality tolerance of the parameterized MILP subproblem. A feasible increase in the maximum hourly production rate would improve the ANROI only if its incremental ANR exceeds the return required on the incremental TCI at the optimal ANROI, i.e.,
\begin{equation}
\Delta\mathrm{ANR}^{k}
>
\lambda^{*}
\Delta\mathrm{TCI}^{k}.
\end{equation}
At the optimum, no feasible increase satisfies this condition beyond the prescribed tolerance.
\par\noindent\textbf{Proof.}\\
A feasible increase in the maximum hourly production rate of unit $k$ improves the ANROI only if
\begin{equation}
\frac{\mathrm{ANR}(\boldsymbol{z}')}
{\mathrm{TCI}(\boldsymbol{z}')}
>
\frac{\mathrm{ANR}(\boldsymbol{z}^{*})}
{\mathrm{TCI}(\boldsymbol{z}^{*})}
=
\lambda^{*}.
\end{equation}
Since $\mathrm{TCI}(\boldsymbol{z}')>0$, this condition is equivalent to
\begin{equation}
\mathrm{ANR}(\boldsymbol{z}')
-
\lambda^{*}\mathrm{TCI}(\boldsymbol{z}')
>
0.
\end{equation}
Using
\begin{equation}
\mathrm{ANR}(\boldsymbol{z}^{*})
-
\lambda^{*}\mathrm{TCI}(\boldsymbol{z}^{*})
=
0,
\end{equation}
the above condition can be rewritten as
\begin{equation}
\Delta\mathrm{ANR}^{k}
-
\lambda^{*}\Delta\mathrm{TCI}^{k}
>
0.
\end{equation}
At convergence, $\boldsymbol{z}^{*}$ solves the parameterized MILP subproblem:
\[
\boldsymbol{z}^{*}
\in
\arg\max_{\boldsymbol{z}\in\mathcal{R}}
\left\{
\mathrm{ANR}(\boldsymbol{z})
-
\lambda^{*}\mathrm{TCI}(\boldsymbol{z})
\right\}.
\]
Therefore, for any feasible solution $\boldsymbol{z}'$,
\[
\mathrm{ANR}(\boldsymbol{z}')
-
\lambda^{*}\mathrm{TCI}(\boldsymbol{z}')
\leq
\mathrm{ANR}(\boldsymbol{z}^{*})
-
\lambda^{*}\mathrm{TCI}(\boldsymbol{z}^{*})
+
\varepsilon.
\]
Rearranging yields
\[
\Delta\mathrm{ANR}^{k}
-
\lambda^{*}
\Delta\mathrm{TCI}^{k}
\leq
\varepsilon.
\]
Hence, an increase in the maximum hourly production rate can improve the ANROI only when
\[
\Delta\mathrm{ANR}^{k}
>
\lambda^{*}\Delta\mathrm{TCI}^{k}.
\]
At the optimum, no feasible increase can satisfy this condition beyond the tolerance $\varepsilon$. For the original MILFP model, the economic value of hourly production rate flexibility is therefore assessed by comparing finite changes between feasible sizing solutions. This completes the proof.
\end{lemma}
\subsection{Evaluation metrics}
The optimal sizing results are further evaluated using the IRR, DPP, RECR, OSREUR.
\\
1) Internal rate of return \par
The IRR is commonly used to evaluate the profitability of engineering projects and is defined as the discount rate at which the project net present value (NPV) becomes zero \cite{patrick2016internal}. For a fixed project lifetime, ANROI and IRR have a strictly monotonic relationship \cite{yu2025novel}; therefore, maximizing ANROI is consistent with maximizing IRR. After the optimal ANROI is obtained, the corresponding IRR is further calculated as follows:
\begin{equation}
\mathrm{IRR}
=
\mathrm{CRF}^{-1}
\left[
\mathrm{ANROI}
+
\mathrm{CRF}(r_0,Y)
\right],
\end{equation}
$r_0$ is the benchmark discount rate;
$Y$ is the project lifetime;
and $\mathrm{CRF}^{-1}(\cdot)$ is the inverse function of the capital recovery factor.
\\
2) Discounted payback period\par
The DPP measures the time required to recover the initial investment based on discounted cash flows.
\begin{equation}
\mathrm{TCI}
=
\mathrm{NCF}
\frac{
1-(1+r_0)^{-\mathrm{DPP}}
}{
r_0
},
\end{equation}
The annual net cash flow is
\begin{equation}
\mathrm{NCF}
=
\mathrm{ANR}
+
AC_{\mathrm{Inv}}.
\end{equation}
The DPP can be calculated as follows:
\begin{equation}
\mathrm{DPP}
=
\frac{
\ln(\mathrm{NCF})
-
\ln(\mathrm{NCF}-r_0\mathrm{TCI})
}{
\ln(1+r_0)
}.
\end{equation}
3) Renewable energy curtailment rate\par
The RECR is defined as the ratio of curtailed RES to the total available RES generation:
\begin{equation}
\psi_{\mathrm{Curt}}
=
\frac{
\displaystyle
\sum_{t\in{T}}
P_{\mathrm{curt}}^{t}
}{
\displaystyle
\sum_{t\in{T}}
\left(
P_{\mathrm{W}}^{t}
+
P_{\mathrm{S}}^{t}
\right)
}
\times 100\%.
\end{equation}
4) Net on-site renewable energy utilization rate\par
The OSREUR is defined as the ratio of net on-site RES electricity consumed within the \ce{H2}-DRI-EAF-MeOH system to the total available on-site RES generation:
\begin{equation}
\psi_{\mathrm{RES,U}}
=
\frac{
\displaystyle
\sum_{t\in{T}}
\left(
P_{\mathrm{RES}}^{t}
-
P_{\mathrm{Sell}}^{t}
\right)
}{
\displaystyle
\sum_{t\in{T}}
\left(
P_{\mathrm{W}}^{t}
+
P_{\mathrm{S}}^{t}
\right)
}
\times 100\%.
\end{equation}
$\psi_{\mathrm{Curt}}$ is the RECR;
and $\psi_{\mathrm{RES,U}}$ is the OSREUR.
\section{Case study}
\label{sec4}
\subsection{Set up}
\label{sec4.1}
To validate the proposed model, the optimization framework was implemented in Python and solved using Gurobi. The annual wind and PV capacity-factor profiles are shown in Fig. 2, while the real-time electricity price and monthly capacity price profiles are shown in Fig. 3. The main operating and investment parameters are listed in Tables~\ref{tab:operation_parameters} and~\ref{tab:investment_parameters}, respectively.
\begin{table}[t]
\centering
\caption{Main operating parameters}
\label{tab:operation_parameters}
\small
\renewcommand{\arraystretch}{1.15}
\begin{tabular}{@{}cccc@{}}
\toprule
Parameters & Value & Parameters & Value \\
\midrule
$\psi_{\mathrm{Total,HDRI}}$ & 0.2211 
& $\psi_{\mathrm{CHDRI}}$ & 0.0521 \\

$\psi_{\mathrm{Tth,DRI}}$ & 0.92 
& $\psi_{\mathrm{ftg}}$ & 0.31 \\

$\psi_{\mathrm{whb}}$ & 0.12 
& $T_{\mathrm{AE}}$ & 1.27 h \\

$T_{\mathrm{SF,su}}/T_{\mathrm{SF,dn}}$ & 8/8 h 
& $\psi_{\mathrm{Eexp}}$ & 0.335 \\

$\psi_{\mathrm{ERcomp}}$ & 0.022 
& $\psi_{\mathrm{ECcomp}}$ & 0.278 \\

$\psi_{\mathrm{H2M}}$ & 0.1875 
& $\psi_{\mathrm{C2M}}$ & 1.375 \\
\bottomrule
\end{tabular}
\end{table}
\begin{table}[t]
\centering
\caption{Main investment and economic parameters}
\label{tab:investment_parameters}
\small
\renewcommand{\arraystretch}{1.15}
\begin{tabular}{@{}cccc@{}}
\toprule
Parameters & Value & Parameters & Value \\
\midrule
$r_0$ & 0.08 
& $Y$ & 20 \\
$\psi_{\mathrm{St,price}}$ & 720 
& $\psi_{\mathrm{DRI,price}}$ & 600 \\
$\psi_{\mathrm{MeOH,price}}$ & 600 
& $\psi_{\mathrm{He,price}}$ & 40 \\
$\psi_{\mathrm{Cb,price}}$ & 35 
& $\psi_{\mathrm{Deg}}$ & 4 \\
$\psi_{\mathrm{OD,price}}$ & 200 
& $\psi_{\mathrm{Sp,price}}$ & 400 \\
$I_{\mathrm{W}}$ & 650000 
& $I_{\mathrm{S}}$ & 550000 \\
$I_{\mathrm{AE}}$ & 300000 
& $I_{\mathrm{BS}}$ & 200000 \\
$I_{\mathrm{HS}}$ & 1000000 
& $I_{\mathrm{Lts}}$ & 200000 \\
$I_{\mathrm{CS}}$ & 500000 
& $I_{\mathrm{DS}}$ & 1000 \\
$C_{\mathrm{SF,max}}$ & 1500 
& $I_{\mathrm{SF,inv}}$ & 240 \\
$C_{\mathrm{EAF,max}}$ & 1500 
& $I_{\mathrm{EAF,inv}}$ & 240 \\
$C_{\mathrm{MSR,max}}$ & 1200 
& $I_{\mathrm{MSR,inv}}$ & 120 \\
$I_{\mathrm{Comp,inv}}$ & 500000 
& $I_{\mathrm{Exp,inv}}$ & 500000 \\
$I_{\mathrm{Leh,inv}}$ & 1000000 
& $I_{\mathrm{Heh,inv}}$ & 40000000 \\
\bottomrule
\end{tabular}
\end{table}
\begin{figure}[ht]
\centering
\includegraphics[width=0.7\textwidth]{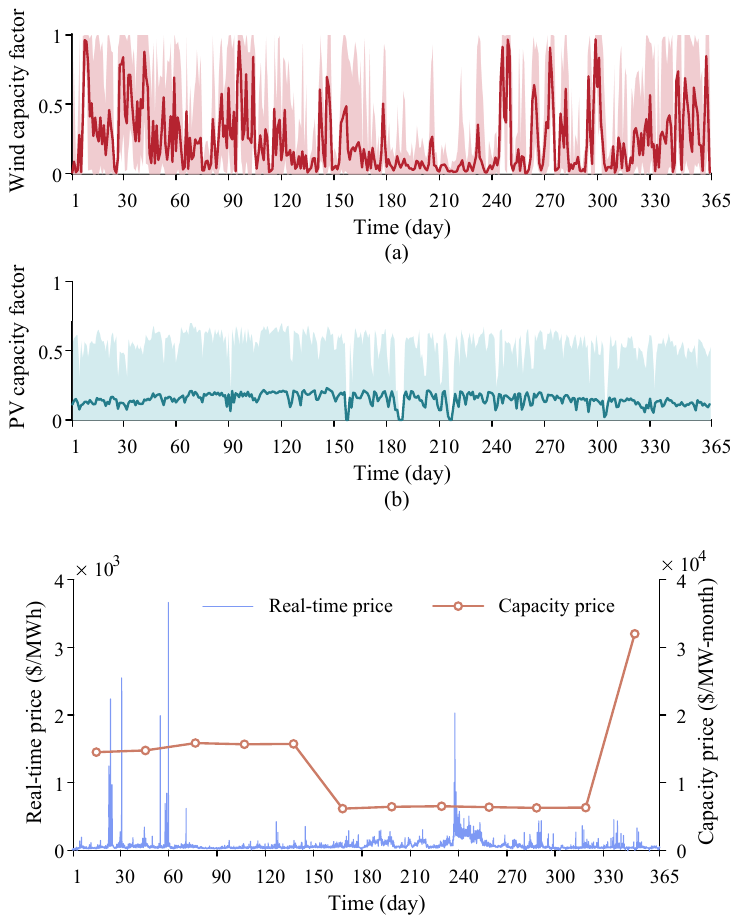}
\setlength{\abovecaptionskip}{-0.1cm}  
\setlength{\belowcaptionskip}{-0.1cm} 
\caption{Annual capacity-factor profiles: (a) Wind; (b) PV}
\captionsetup{justification=centering}
\vspace{-0.2cm}
\end{figure}
\begin{figure}[ht]
\centering
\includegraphics[width=0.7\textwidth]{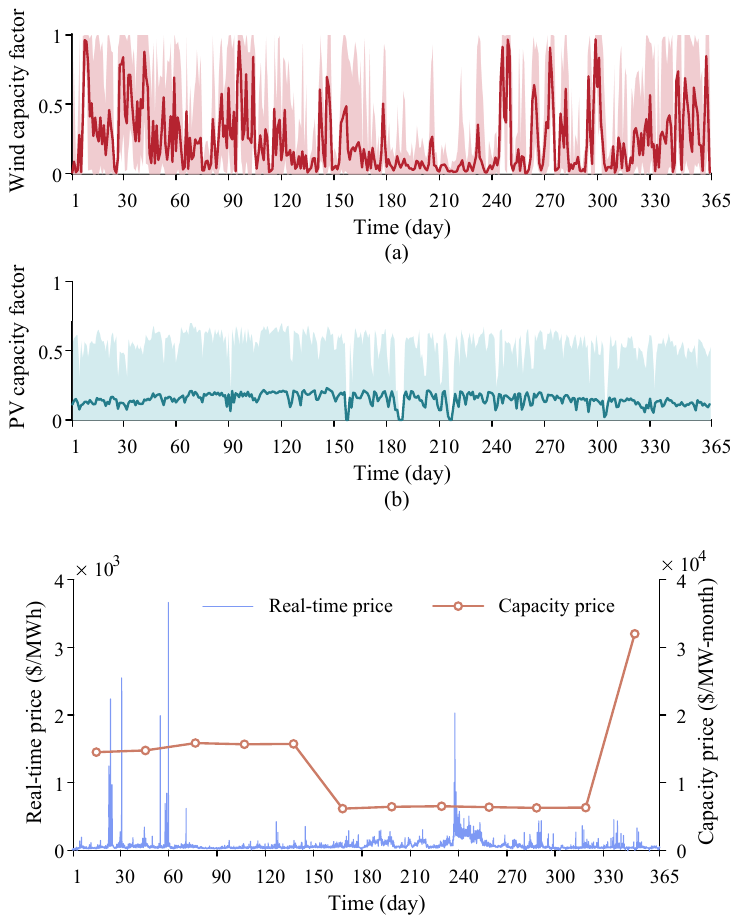}
\setlength{\abovecaptionskip}{-0.1cm}  
\setlength{\belowcaptionskip}{-0.1cm} 
\caption{Annual electricity price profiles: (a) real-time electricity price; (b) monthly capacity electricity price}
\captionsetup{justification=centering}
\vspace{-0.4cm}
\end{figure}
\subsection{On-grid optimal sizing and system performance}
Under the on-grid scenario, the ANROI-based optimal sizing solution outperforms the ANR benchmark in both economic performance and RES utilization. The IRR increases from 18.93\% to 20.67\%, corresponding to a relative increase of 9.20\%, while the DPP decreases from 6.84 to 6.17 years, corresponding to a relative reduction of 9.80\%. Meanwhile, the RECR decreases from 3.38\% to 0.01\%, and the OSREUR increases from 76.62\% to 83.03\%. These results show that the ANROI objective improves investment return while enhancing on-site RES utilization. \par
The optimized sizing includes 1317.37 MW of wind power, 1454.90 MW of PV, 610.56 MW of AE, 279.41 t-\ce{H2} of HS, and 3856.45 t-DRI of DS. The SF and EAF are sized at $1.0242\times10^{6}$ and $1.2000\times10^{6}$ t/year, respectively, with maximum hourly production rate limits of 157.97 and 300.00 t/h. Additionally, BS, CS, and MSR are not installed, indicating that grid trading substitutes for the short-term balancing role of BS, while the \ce{CO2}-to-MeOH pathway remains economically uncompetitive under the current parameter setting. \par
Under the on-grid scenario, the SF and EAF adjust their production rates to match temporal variations in RES generation. Fig. 4 presents the annual production profiles of DRI and crude steel. Both production profiles decline simultaneously around days 180--240, consistent with the low RES capacity factors, as shown in Fig. 2, and consequently low RES power output. This indicates that SF and EAF production remains largely driven by on-site RES generation. \par
\begin{figure}[ht]
\centering
\includegraphics[width=0.7\textwidth]{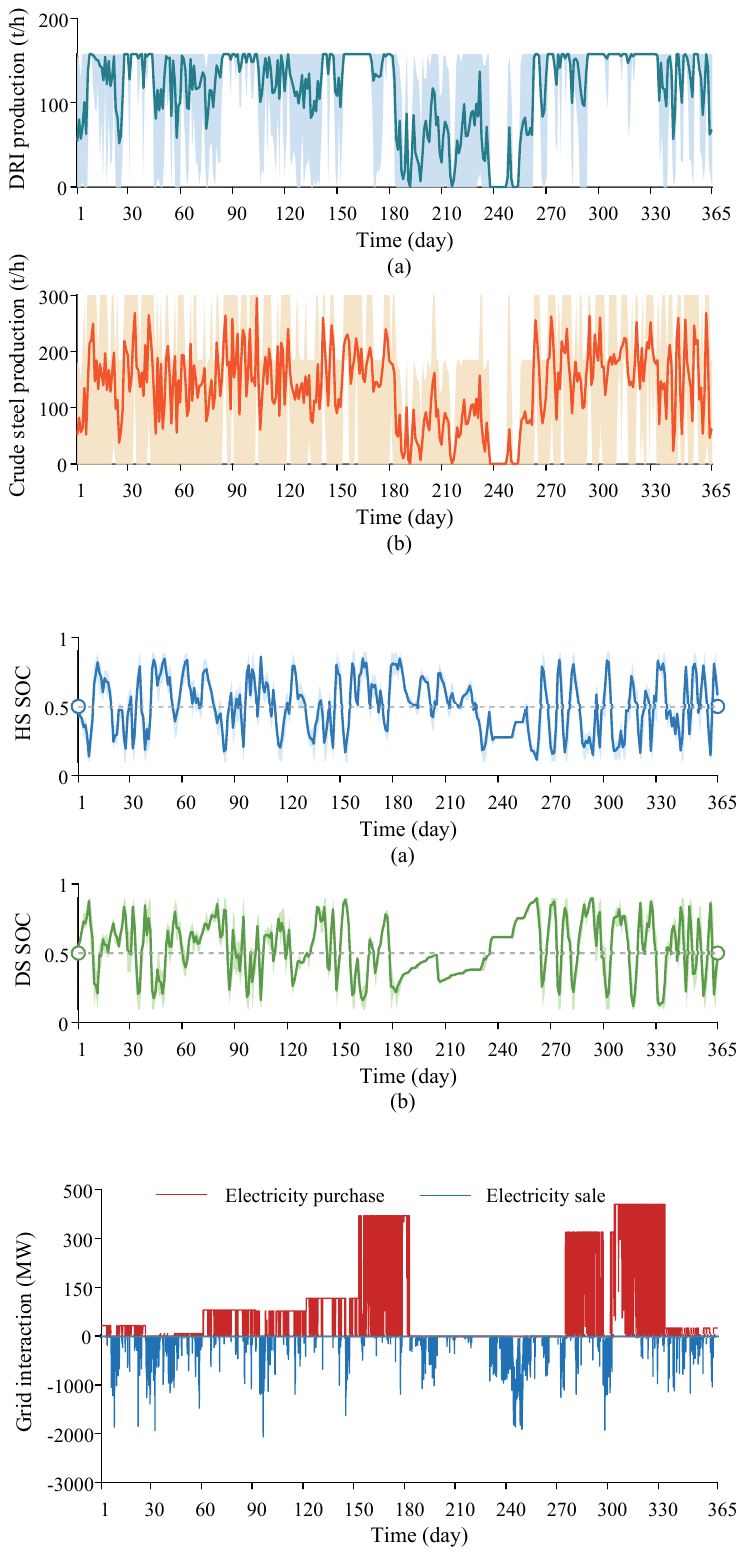}
\setlength{\abovecaptionskip}{-0.1cm}  
\setlength{\belowcaptionskip}{-0.1cm} 
\caption{Annual production profiles of the SF and EAF under the on-grid scenario: (a) DRI; (b) crude steel. Solid lines represent daily averages, and shaded areas indicate the daily minimum–maximum ranges.}
\captionsetup{justification=centering}
\vspace{-0.4cm}
\end{figure}
Meanwhile, HS and DS provide production side flexibility by balancing hydrogen supply and demand and buffering DRI flows, respectively. As shown in Fig. 5, the HS SOC fluctuates frequently over the year, indicating frequent charging and discharging to buffer short-term mismatches. In contrast, the DS exhibits sustained inventory accumulation and depletion, indicating longer-term buffering between DRI production in the SF and DRI consumption by the EAF.\par
\begin{figure}[ht]
\centering
\includegraphics[width=0.7\textwidth]{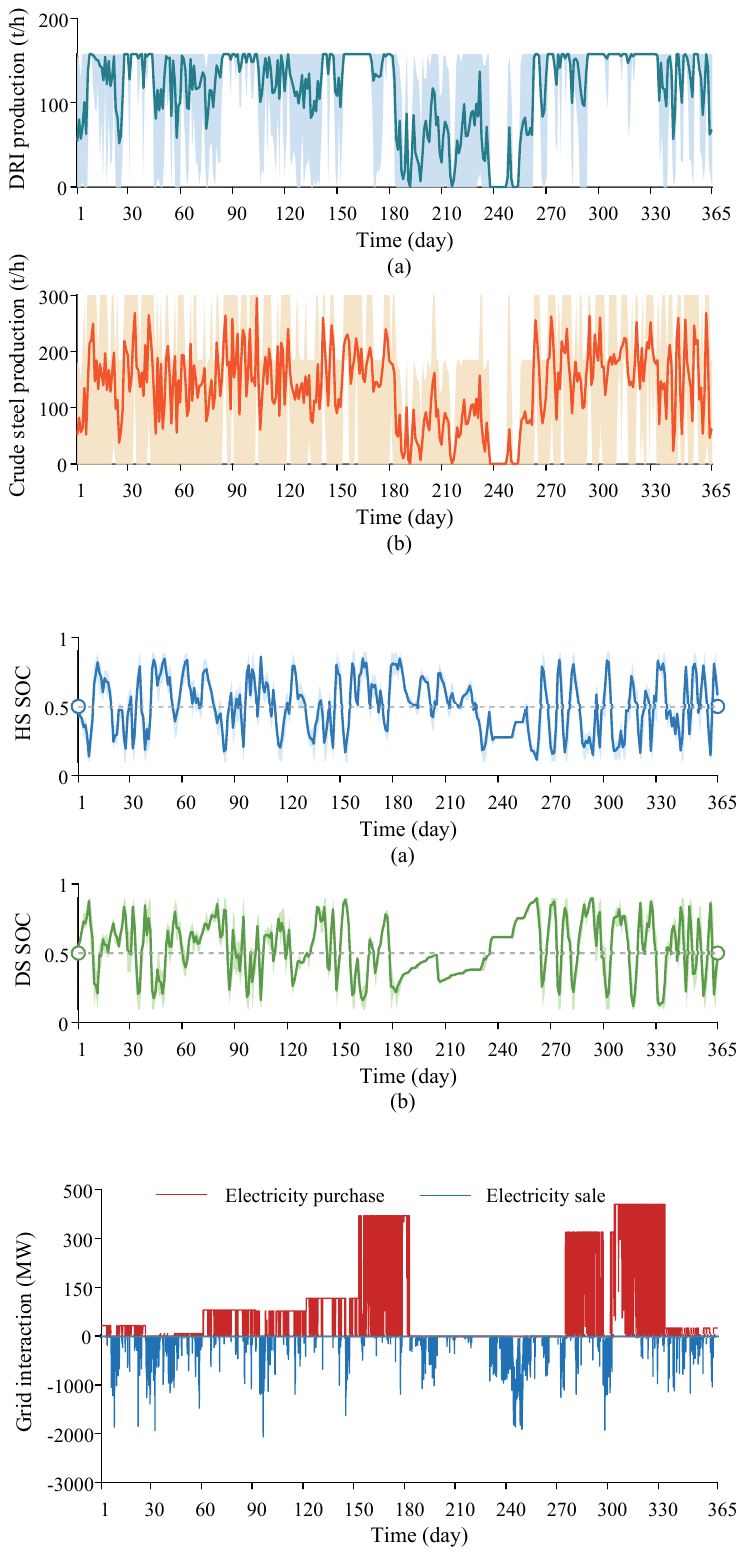}
\setlength{\abovecaptionskip}{-0.1cm}  
\setlength{\belowcaptionskip}{-0.1cm} 
\caption{Annual SOC profiles of storage units under the on-grid scenario: (a) HS; (b) DS.}
\captionsetup{justification=centering}
\vspace{-0.4cm}
\end{figure}
Additionally, the capacity electricity price suppresses monthly peak electricity purchases, while the real-time electricity price shapes electricity sales. As shown in Fig. 6, electricity purchases exhibit distinct stepwise plateaus across different months, whereas electricity sales fluctuate frequently over the year.
\begin{figure}[ht]
\centering
\includegraphics[width=0.7\textwidth]{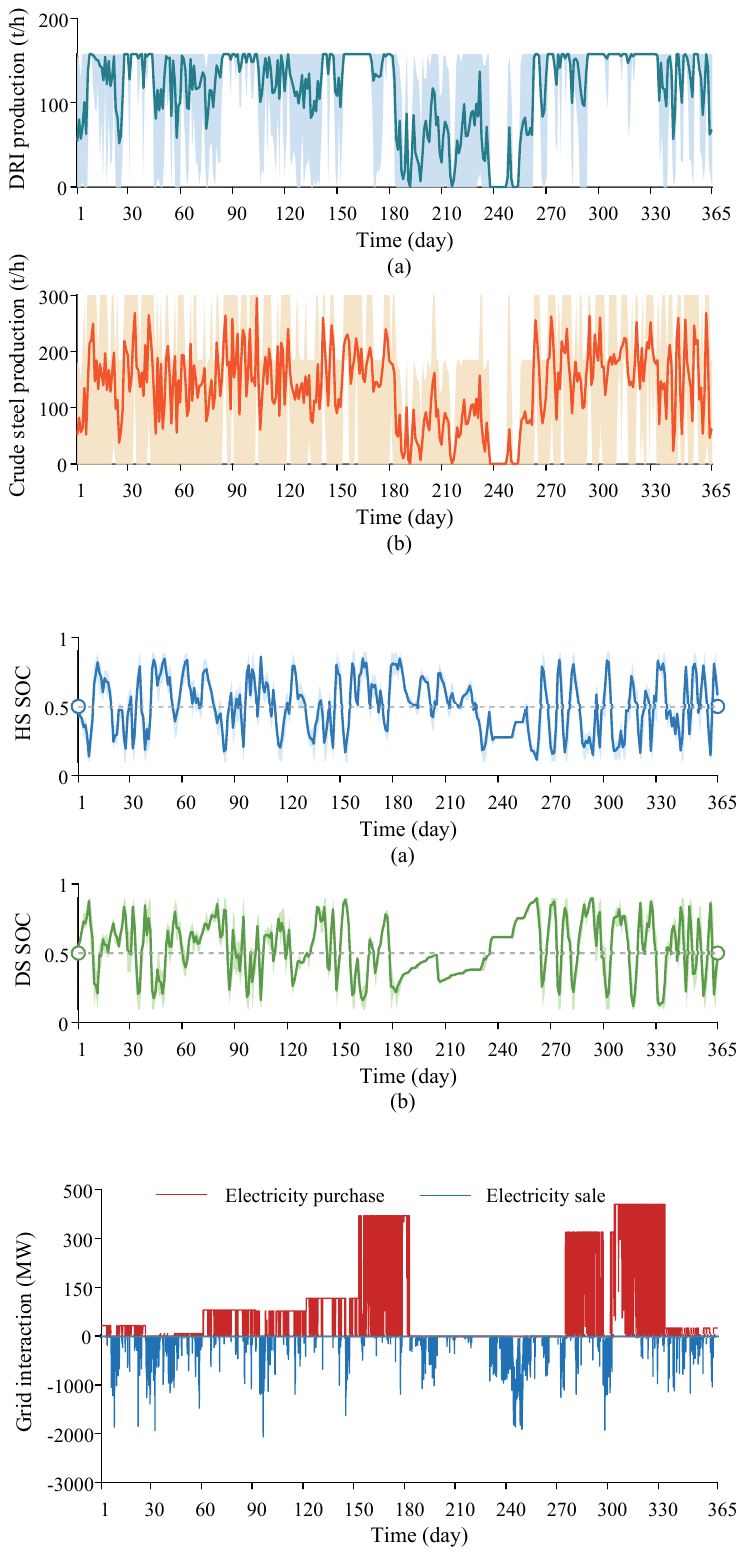}
\setlength{\abovecaptionskip}{-0.1cm}  
\setlength{\belowcaptionskip}{-0.1cm} 
\caption{Annual grid electricity purchase and sale profiles under the on-grid scenario.}
\captionsetup{justification=centering}
\vspace{-0.4cm}
\end{figure}
\subsection{Off-grid optimal sizing and system performance}
Under the off-grid scenario, the ANROI-based optimal sizing solution marginally outperforms the ANR benchmark in both economic performance and RES utilization. The IRR increases from 16.90\% to 17.08\%, while the DPP decreases from 7.83 to 7.73 years. Meanwhile, the RECR decreases from 5.00\% to 4.76\%, and the OSREUR increases from 95.00\% to 95.24\%. These results indicate that the ANROI objective improves investment return and RES utilization, although the improvement is less pronounced than that under the on-grid scenario. \par
The optimized sizing comprises 1193.60 MW of wind power, 1745.97 MW of PV, 712.27 MW of AE, 334.68 t-\ce{H2} of HS, and 6428.61 t-DRI of DS. The annual production capacities of the SF and EAF are $1.0242\times10^{6}$ and $1.2000\times10^{6}$ t/year, respectively, with maximum hourly production-rate limits of 189.21 and 300.00 t/h. A small BS capacity of 3.69 MWh is installed, whereas CS and MSR are not installed, indicating that BS provides only a marginal balancing contribution and that the \ce{CO2}-to-MeOH pathway remains economically uncompetitive under the current parameter setting. \par
Compared with the on-grid optimal sizing, the off-grid system requires larger RES, AE, HS, and DS capacities to maintain internal energy and material balance without grid support. Grid interaction reduces these capacity requirements while providing additional revenue through electricity trading, thereby resulting in a higher IRR and a shorter DPP. \par
\begin{figure}[ht]
\centering
\includegraphics[width=0.7\textwidth]{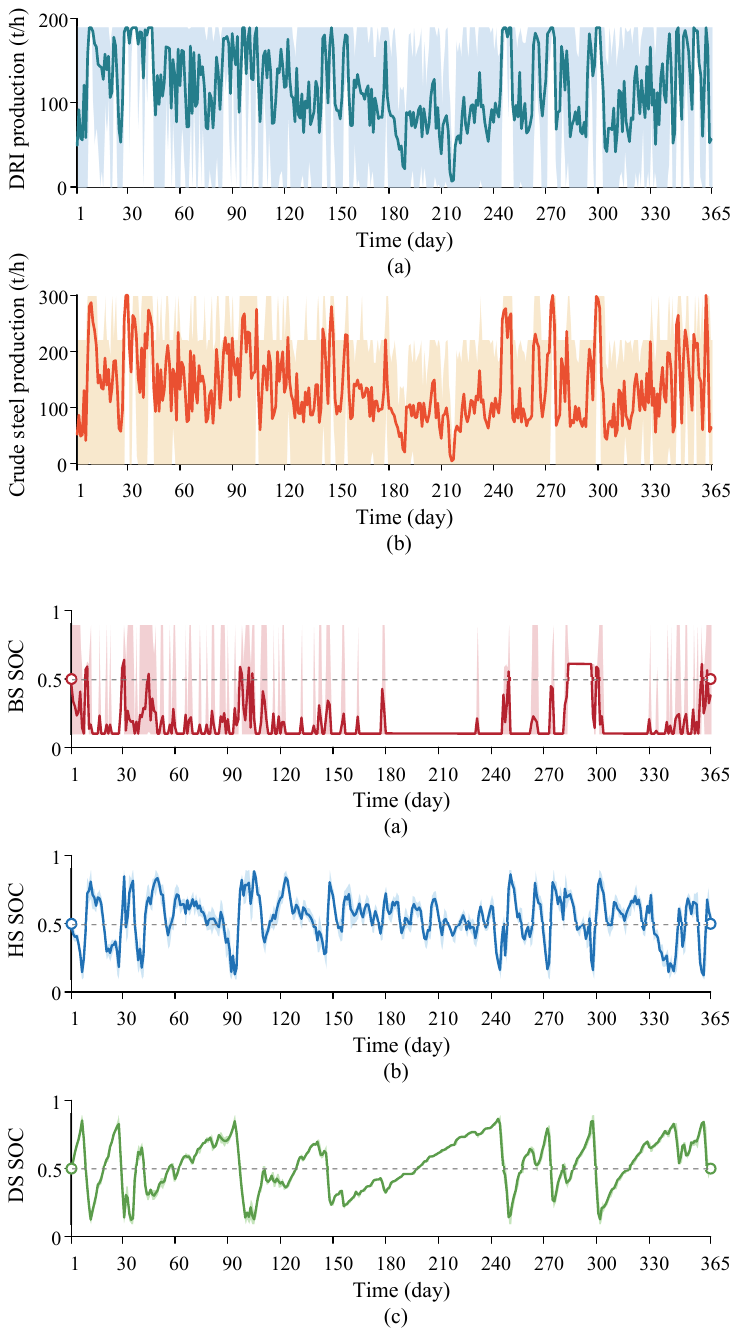}
\setlength{\abovecaptionskip}{-0.1cm}  
\setlength{\belowcaptionskip}{-0.1cm} 
\caption{Annual production profiles of the SF and EAF under the off-grid scenario: (a) DRI; (b) crude steel. }
\captionsetup{justification=centering}
\vspace{-0.4cm}
\end{figure}
\begin{figure}[ht]
\centering
\includegraphics[width=0.7\textwidth]{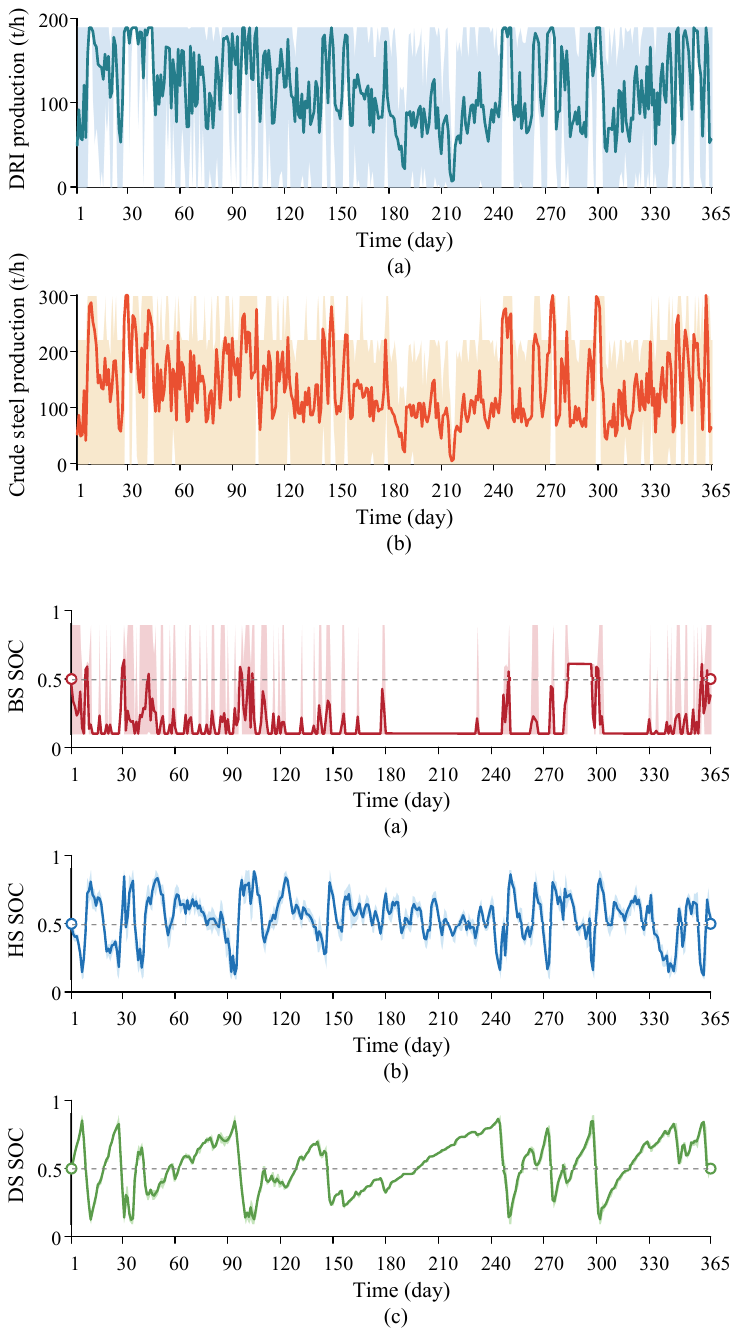}
\setlength{\abovecaptionskip}{-0.1cm}  
\setlength{\belowcaptionskip}{-0.1cm} 
\caption{Annual SOC profiles of storage units under the off-grid scenario: (a) BS; (b) HS; (c) DS.}
\captionsetup{justification=centering}
\vspace{-0.4cm}
\end{figure}
Additionally, Figs. 7 and 8 show that DRI and crude steel production vary with RES generation, while HS and DS provide the principal buffering and the BS remains marginally utilized. This indicates that internal mass and energy balancing primarily relies on production side flexibility and the buffering of HS and DS.
\subsection{System value of hourly production rate flexibility}
Hourly production rate flexibility expands the feasible operating range of the SF and EAF, thereby improving economic performance and RES utilization. This section evaluates its system value under both on-grid and off-grid scenarios.
\subsubsection{On-grid scenario}
Under the on-grid scenario, greater hourly production rate flexibility increases the system IRR, but with diminishing marginal gains. As shown in Fig. 9, with the annual production capacities held constant, fewer equivalent annual operating hours correspond to higher maximum hourly production rates of the SF and EAF and, therefore, greater production-rate flexibility. The IRR increases sharply as the equivalent annual operating hours decrease from 8000 to 7000 h, whereas the curve gradually flattens with further reductions, indicating diminishing marginal returns from additional hourly production-rate flexibility. Overall, reducing the equivalent annual operating hours from 8000 to 4000 h increases the system IRR from 19.68\% to 20.67\%, a relative increase of 5.03\%. \par
\begin{figure}[ht]
\centering
\includegraphics[width=0.7\textwidth]{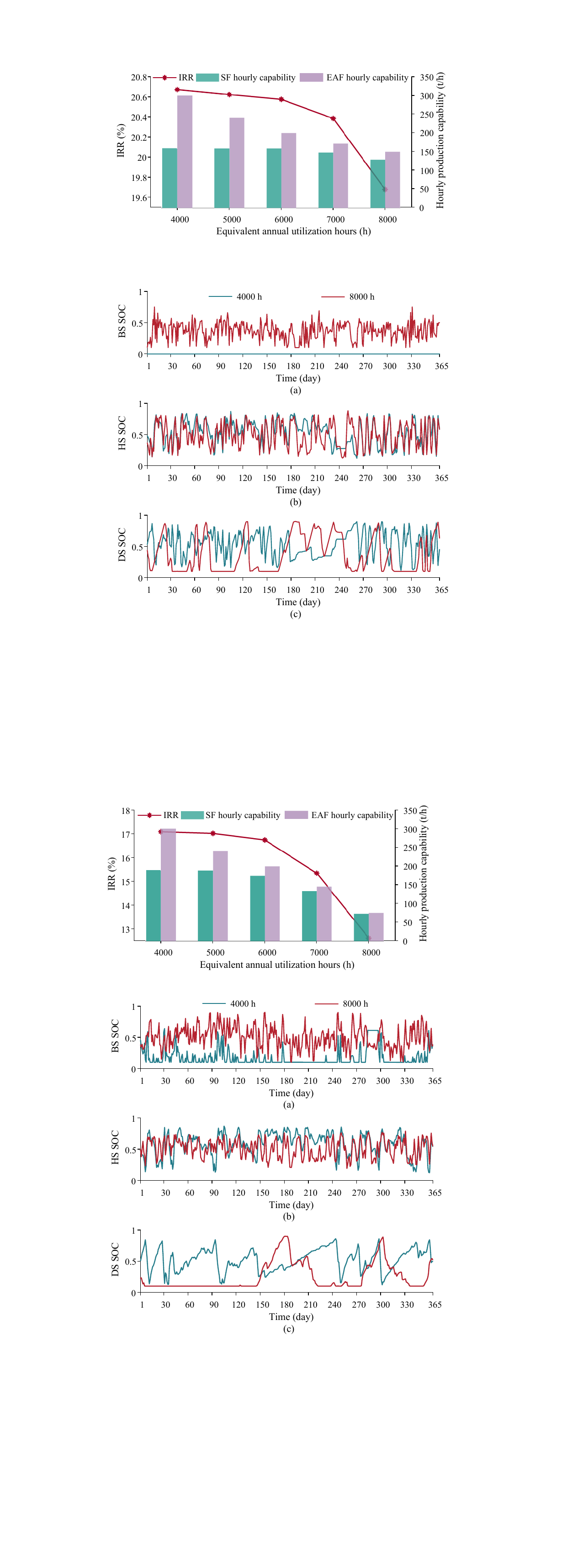}
\setlength{\abovecaptionskip}{-0.1cm}  
\setlength{\belowcaptionskip}{-0.1cm} 
\caption{Maximum hourly production rate flexibility and system IRR under the on-grid scenario}
\captionsetup{justification=centering}
\vspace{-0.2cm}
\end{figure}
Table 3 further compares the optimal sizing results at 8000 and 4000 equivalent annual operating hours. The maximum hourly production rates of the SF and EAF increase from 129.00 and 150.00 t/h to 157.99 and 300.00 t/h, respectively, while the BS capacity decreases from 113.56 MWh to 0. Additionally, the HS and DS capacities increase from 228.17 t-\ce{H2} and 475.94 t-DRI to 279.46 t-\ce{H2} and 3856.52 t-DRI, respectively. These results indicate that greater hourly production rate flexibility improves the temporal alignment of production with RES generation, thereby reducing the BS requirement while increasing the storage requirements for hydrogen and DRI.
\begin{table}[htbp]
    \centering
    \caption{On-grid optimal sizing at different hourly production-rate flexibility levels}
    \resizebox{\linewidth}{!}{
    \begin{tabular}{cccccc}
        \toprule
        \begin{tabular}[c]{@{}c@{}}
        Equivalent annual\\
        operating hours (h)
        \end{tabular}
        &
        \begin{tabular}[c]{@{}c@{}}
        SF production rate\\
        (t/h)
        \end{tabular}
        &
        \begin{tabular}[c]{@{}c@{}}
        EAF production rate\\
        (t/h)
        \end{tabular}
        &
        \begin{tabular}[c]{@{}c@{}}
        BS\\
        (MWh)
        \end{tabular}
        &
        \begin{tabular}[c]{@{}c@{}}
        HS\\
        (t-\ce{H2})
        \end{tabular}
        &
        \begin{tabular}[c]{@{}c@{}}
        DS\\
        (t-DRI)
        \end{tabular}
        \\
        \midrule
        4000 & 157.99 & 300.00 & 0      & 279.46 & 3856.52 \\
        8000 & 129.00 & 150.00 & 113.56 & 228.17 & 475.94  \\
        \bottomrule
    \end{tabular}
    }
\end{table}
\subsubsection{Off-grid scenario}
Without grid interaction, the economic value of hourly production rate flexibility becomes substantially more pronounced. As shown in Fig. 10, reducing the equivalent annual operating hours from 8000 to 4000 h increases the system IRR from 12.64\% to 17.08\%, a relative increase of 35.22\%, substantially exceeding the 5.03\% obtained under the on-grid scenario. \par
\begin{table}[htbp]
    \centering
    \caption{Off-grid optimal sizing at different hourly production-rate flexibility levels}
    \resizebox{\linewidth}{!}{
    \begin{tabular}{ccccccc}
        \toprule
        \begin{tabular}[c]{@{}c@{}}
        Equivalent annual\\
        operating hours (h)
        \end{tabular}
        &
        \begin{tabular}[c]{@{}c@{}}
        SF production rate\\
        (t/h)
        \end{tabular}
        &
        \begin{tabular}[c]{@{}c@{}}
        EAF production rate\\
        (t/h)
        \end{tabular}
        &
        \begin{tabular}[c]{@{}c@{}}
        BS\\
        (MWh)
        \end{tabular}
        &
        \begin{tabular}[c]{@{}c@{}}
        HS\\
        (t-\ce{H2})
        \end{tabular}
        &
        \begin{tabular}[c]{@{}c@{}}
        DS\\
        (t-DRI)
        \end{tabular}
        \\
        \midrule
        4000 & 189.21 & 300.00 & 3.69   & 334.68 & 6428.61 \\
        8000 & 73.47  & 75.89  & 800.00 & 129.95 & 8283.56 \\
        \bottomrule
    \end{tabular}
    }
\end{table}
\begin{figure}[ht]
\centering
\includegraphics[width=0.7\textwidth]{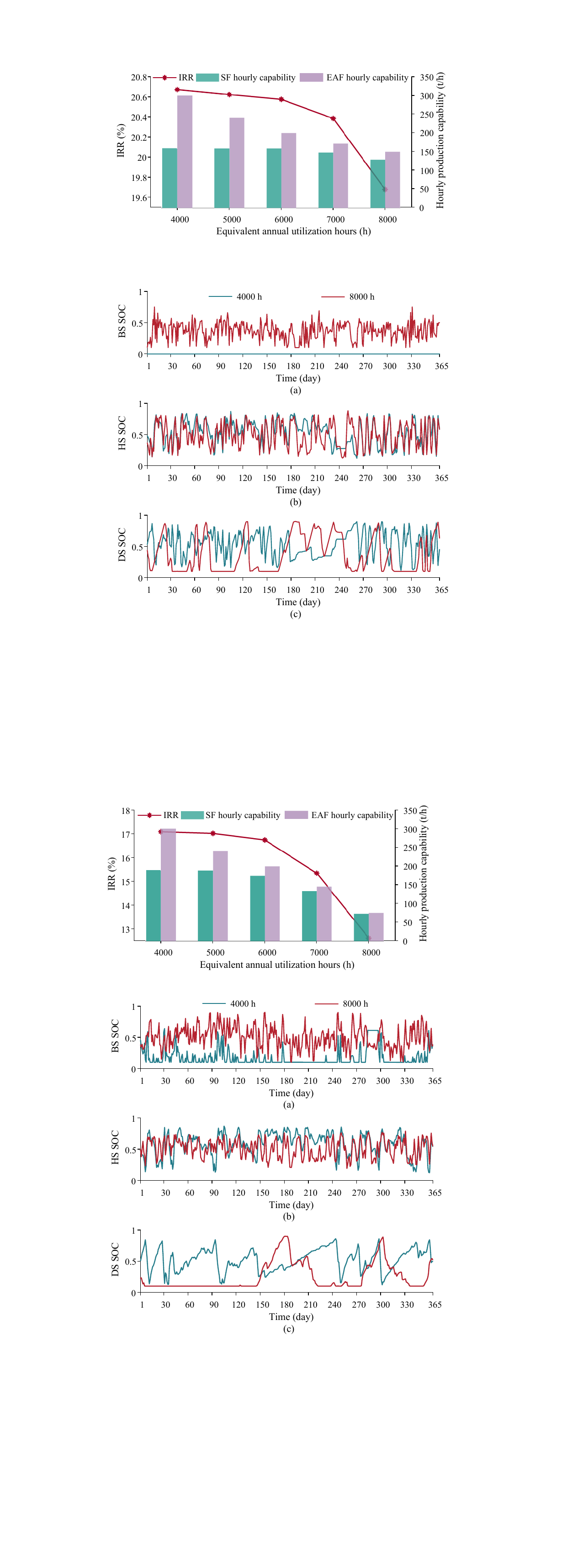}
\setlength{\abovecaptionskip}{-0.1cm}  
\setlength{\belowcaptionskip}{-0.1cm} 
\caption{Maximum hourly production rate flexibility and system IRR under the off-grid scenario}
\captionsetup{justification=centering}
\vspace{-0.2cm}
\end{figure}
Table 4 further compares the off-grid sizing results at 8000 and 4000 equivalent annual operating hours. The maximum hourly production rates of the SF and EAF increase from 73.47 and 75.89 t/h to 189.21 and 300.00 t/h, respectively. Meanwhile, the BS capacity decreases sharply from 800.00 to 3.69 MWh, whereas the HS increases from 129.95 to 334.68 t-\ce{H2} and the DS decreases from 8283.56 to 6428.61 t-DRI. These changes indicate that greater hourly production rate flexibility substantially reduces reliance on BS for off-grid power balancing, while increasing the HS capacity and reducing the DS capacity.
\subsection{Optimal sizing under off-grid zero-carbon scenario}
The off-grid zero-carbon constraint substantially reduces the economic performance of the integrated \ce{H2}-DRI-EAF-MeOH system. The system achieves an IRR of 14.99\%, a DPP of 9.04 years, an RECR of 5.89\%, and an OSREUR of 94.11\%. Compared with the off-grid scenario without the zero-carbon constraint, the IRR decreases from 17.08\% to 14.99\%, corresponding to a relative reduction of 12.24\%. At the MeOH production scale of $9.59\times10^{4}$ t/year, the revenues from MeOH sales and carbon allowances remain insufficient to offset the investment and operating costs of the CCU-MeOH pathway. A further comparison with the on-grid scenario shows that the IRR decreases from 20.67\% to 14.99\%, corresponding to a relative reduction of 27.48\%. The wider gap mainly reflects the loss of the balancing and revenue benefits of grid interaction.\par
The optimized sizing includes 1505.01 MW of wind power, 2000.00 MW of PV, 911.09 MW of AE, 362.14 t-\ce{H2} of HS, and 938.32 t-DRI of DS. The annual production capacities of the SF and EAF are $9.931\times10^{5}$ and $1.1626\times10^{6}$ t/year, respectively, with maximum hourly production-rate limits of 190.33 and 290.66 t/h. In addition, 186.91 t-\ce{CO2} of CS and an MSR with an annual MeOH production capacity of $9.59\times10^{4}$ t/year are installed. Compared with the off-grid scenario without the zero-carbon constraint, larger RES, AE, and HS capacities are required to satisfy the additional electricity and hydrogen demand introduced by MeOH synthesis while maintaining hydrogen supply for iron reduction. \par
Additionally, Fig. 11 presents the coordinated operation of ironmaking, steelmaking, and MeOH synthesis under the zero-carbon constraint. MeOH and crude steel production exhibit similar temporal patterns. The limited CS capacity of 186.91 t-\ce{CO2} and the storage profiles in Fig. 12 indicate that HS and DS provide the main intertemporal buffering, whereas CS only mitigates short-term mismatches between EAF \ce{CO2} capture and MeOH synthesis.
\begin{figure}[ht]
\centering
\includegraphics[width=0.7\textwidth]{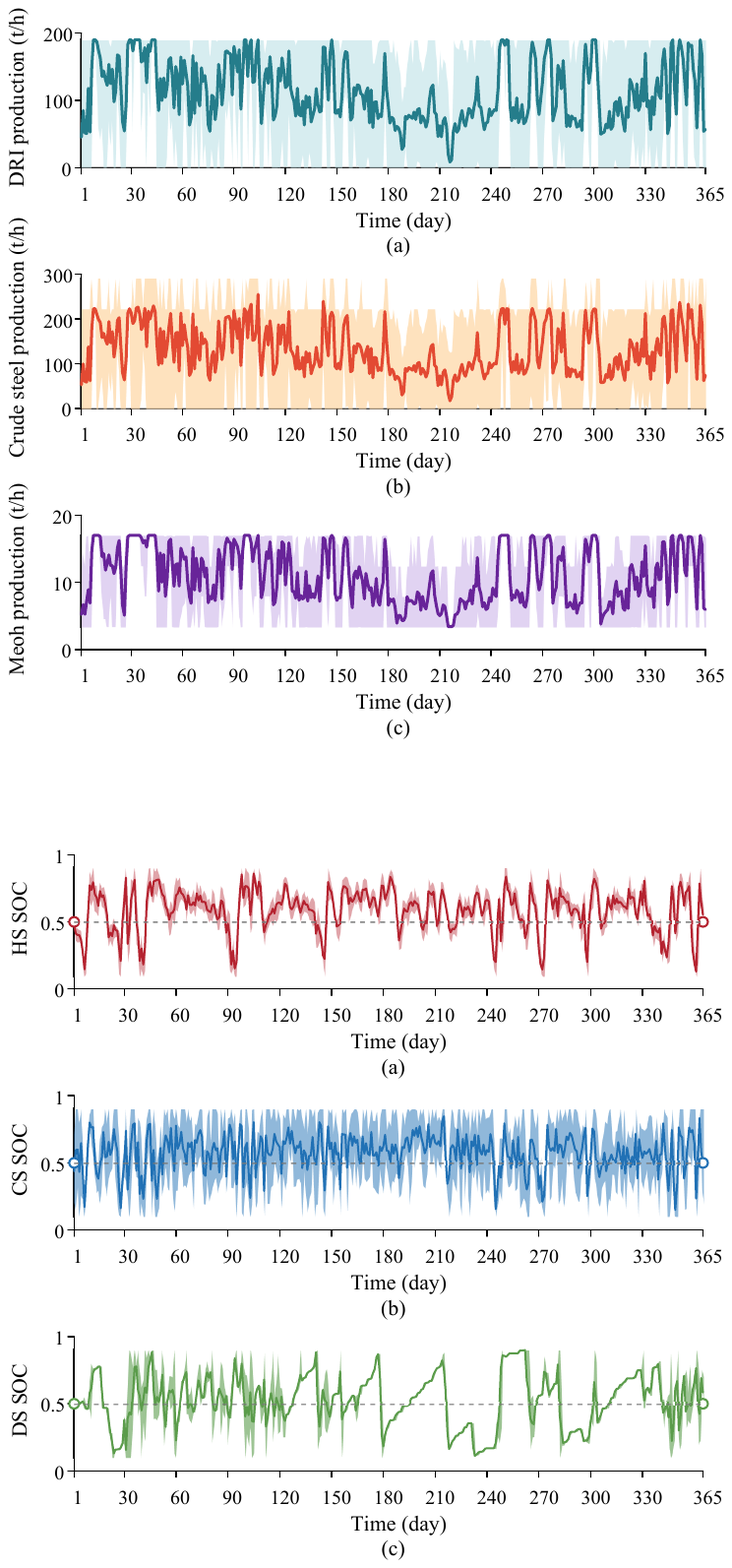}
\setlength{\abovecaptionskip}{-0.1cm}  
\setlength{\belowcaptionskip}{-0.1cm} 
\caption{Annual production profiles of the SF, EAF, and MSR under the off-grid zero-carbon emission scenario: (a) DRI; (b) crude steel; (c) MeOH.}
\captionsetup{justification=centering}
\vspace{-0.2cm}
\end{figure}
\begin{figure}[ht]
\centering
\includegraphics[width=0.7\textwidth]{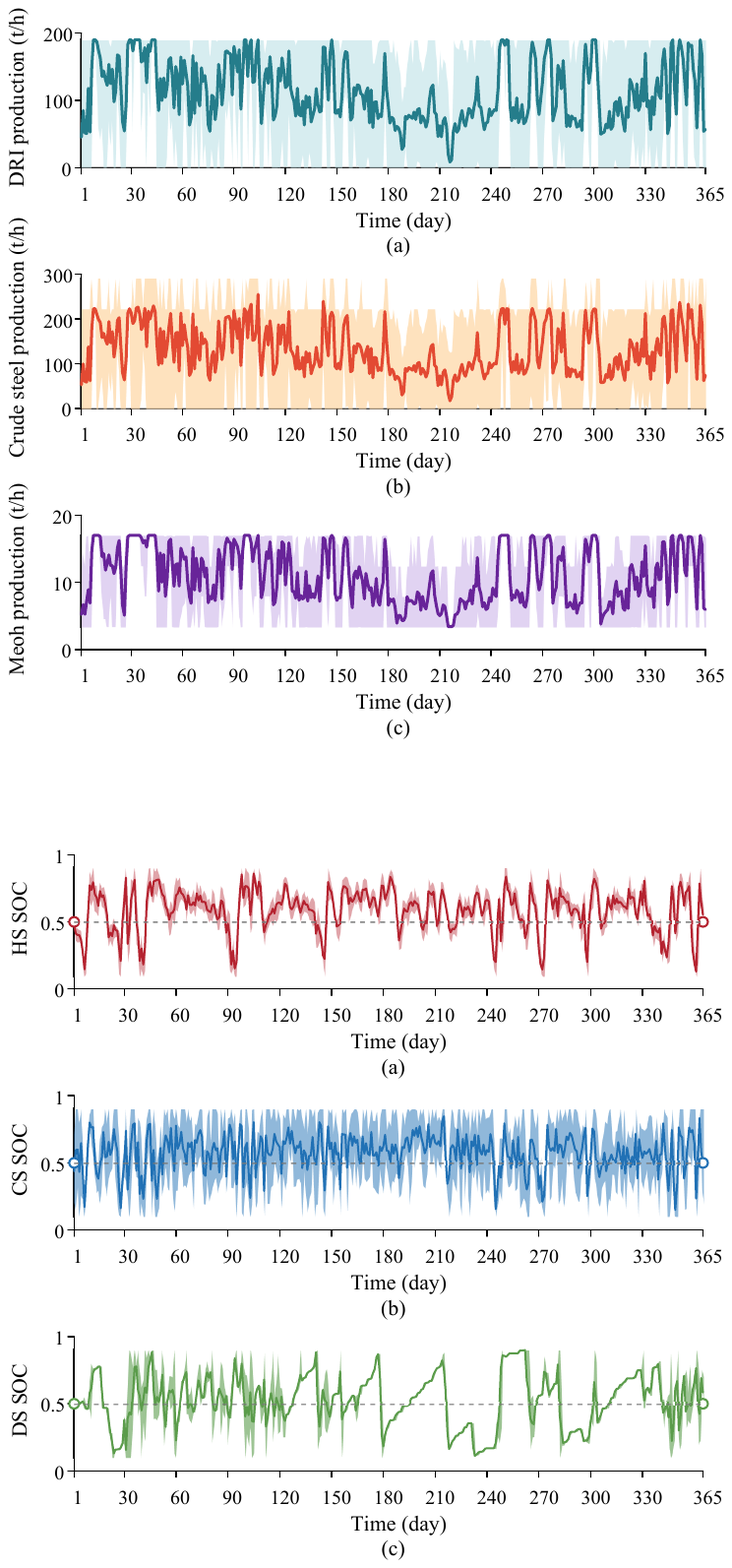}
\setlength{\abovecaptionskip}{-0.1cm}  
\setlength{\belowcaptionskip}{-0.1cm} 
\caption{Annual SOC profiles of storage units under the off-grid zero-carbon emission scenario: (a) HS; (b) CS; (c) DS.}
\captionsetup{justification=centering}
\vspace{-0.2cm}
\end{figure}
\section{Conclusion}
\label{sec5}
This study develops a fractional programming-based optimal sizing model for an integrated \ce{H2}-DRI-EAF-MeOH system to maximize investment return. The techno-economic performance of the system is analyzed under the on-grid, off-grid, and off-grid zero-carbon scenarios. The main conclusions are as follows:\par
1) The ANROI-based sizing improves system investment return under both the on-grid and off-grid scenarios. Compared with the ANR benchmark, the system IRR increases from 18.93\% to 20.67\% under the on-grid scenario, corresponding to a relative increase of 9.20\%;  under the off-grid scenario, it increases from 16.90\% to 17.08\%, corresponding to a relative increase of 1.07\%. \par
2) Greater hourly production-rate flexibility improves economic performance and substantially reduces reliance on BS for short-term power balancing. As the equivalent annual operating hours decrease from 8000 to 4000 h, the on-grid IRR increases from 19.68\% to 20.67\%, corresponding to a relative increase of 5.03\%, while the BS capacity decreases by 113.56 MWh. Under the off-grid scenario, the IRR increases from 12.64\% to 17.08\%, corresponding to a relative increase of 35.22\%, while the BS capacity decreases by 796.31 MWh. \par 
3) The off-grid zero-carbon constraint requires the deployment of the CCU-MeOH pathway but lowers system investment return. Compared with the off-grid scenario, the system IRR decreases from 17.08\% to 14.99\%, indicating that MeOH and carbon-allowance revenues are insufficient to offset the investment and operational costs of the CCU-MeOH pathway. \par
In the future, large-scale integration between steel plants and chemical industrial parks will be investigated by incorporating multi-product \ce{CO2} utilization pathways and analyzing their techno-economic performance.\par
\section*{Credit authorship contribution statement}
\noindent Qiang Ji: Conceptualization, Methodology, Software, Writing. Fashun Shi: Editing, Supervision, Conceptualization. Lin Cheng: Conceptualization, Methodology. Yeye Xie: Editing, Supervision. Yufei Xi: Editing, Supervision. Zhen Dai: Conceptualization. Xun Wang: Supervision, Conceptualization.\par
\vspace{0.3cm}
\noindent \textbf{Declaration of competing interest}\par
\noindent The authors declare that they have no known competing financial interests or personal relationships that could have appeared to influence the work reported in this paper.
\vspace{0.3cm}
\par
\noindent \textbf{Acknowledgements}
\noindent This work was supported by Smart-Grid National Science and Technology Major Project (No. 2025ZD0806300).

\bibliographystyle{elsarticle-num}
\bibliography{AE}

\end{document}